\documentclass[draftclsnofoot]{IEEEtran}
\onecolumn
\usepackage{latexsym}
\usepackage{cite}
\usepackage{amssymb}
\usepackage{bm}
\usepackage{amsmath, amssymb}
\usepackage[dvips]{graphics}
\usepackage{graphicx}
\usepackage{psfrag}

\newtheorem{theorem}{Theorem}
\newtheorem{lemma}{Lemma}

\newtheorem{corollary}{Corollary}

\newtheorem{example}{Example}

\begin{document}
\title{Interpretations of Directed Information in Portfolio Theory,
  Data Compression, and Hypothesis Testing}
\author{Haim H.~Permuter,  Young-Han Kim, and Tsachy Weissman
\thanks{The work was supported by the NSF grant CCF-0729195
and BSF grant 2008402. Author's emails: haimp@bgu.ac.il,
yhk@ucsd.edu, and tsachy@stanford.edu.} }

\maketitle

\begin{abstract}
We investigate the role of Massey's directed information in portfolio
theory, data compression, and statistics with causality
constraints. In particular, we show that directed information is an
upper bound on the increment in growth rates of optimal portfolios in
a stock market due to {causal} side information.  This upper bound is
tight for gambling in a horse race, which is an extreme case of stock
markets. Directed information also characterizes the value of {causal}
side information in instantaneous compression and quantifies the
benefit of {causal} inference in joint compression of two stochastic
processes. In hypothesis testing, directed information evaluates the
best error exponent for testing whether a random process $Y$
{causally} influences another process $X$ or not. These results give a
natural interpretation of directed information $I(Y^n \to X^n)$ as the
amount of information that a random sequence $Y^n = (Y_1,Y_2,\ldots,
Y_n)$ {causally} provides about another random sequence $X^n =
(X_1,X_2,\ldots,X_n)$. A new measure, {\em directed lautum
  information}, is also introduced and interpreted in portfolio
theory, data compression, and hypothesis testing.
\end{abstract}

\begin{keywords}
Causal conditioning, causal side information, directed information,
hypothesis testing, instantaneous compression, Kelly gambling, Lautum
information, portfolio theory.
\end{keywords}

\section{Introduction}
Mutual information $I(X;Y)$ between two random variables $X$ and
$Y$arises as the canonical answer to a variety of questions in science
and engineering. Most notably, Shannon~\cite{Shannon48} showed that
the capacity $C$, the maximal data rate for reliable communication, of
a discrete memoryless channel $p(y|x)$ with input $X$ and output $Y$
is given by
\begin{equation}
\label{eq:shannon}
C = \max_{p(x)} I(X;Y).
\end{equation}
Shannon's channel coding theorem leads naturally to the operational
interpretation of mutual information $I(X;Y) = H(X) - H(X|Y)$ as the
amount of uncertainty about $X$ that can be reduced by observation
$Y$, or equivalently, the amount of information that $Y$ can provide
about $X$. Indeed, mutual information $I(X;Y)$ plays a central role in
Shannon's random coding argument, because the probability that
independently drawn sequences $X^n = (X_1,X_2,\ldots, X_n)$ and $Y^n =
(Y_1,Y_2,\ldots, Y_n)$ ``look'' as if they were drawn jointly decays
exponentially with $I(X;Y)$ in the first order of the
exponent. Shannon also proved a dual result \cite{Shannon60} that the
rate distortion function $R(D)$, the minimum compression rate to
describe a source $X$ by its reconstruction $\hat{X}$ within average
distortion $D$, is given by $R(D) = \min_{p(\hat{x}|x)}
I(X;\hat{X})$. In another duality result (the Lagrange duality this
time) to \eqref{eq:shannon}, Gallager \cite{Gal77} proved the minimax
redundancy theorem, connecting the redundancy of the universal
lossless source code to the maximum mutual information (capacity) of
the channel with conditional distribution that consists of the set of
possible source distributions (cf. \cite{Ryabko79}).

It has been shown that mutual information has also an important
role in problems that are not necessarily related to describing
sources or transferring information through channels. Perhaps the most
lucrative of such examples is the use of mutual information in gambling.
In 1956, Kelly~\cite{Kelly56} showed that if a horse race outcome can
be represented as an independent and identically distributed (i.i.d.)
random variable $X$, and the gambler has some side information $Y$
relevant to the outcome of the race, then
%% under some conditions on
%% the odds,
the mutual information $I(X;Y)$ captures the difference between growth
rates of the optimal gambler's wealth with and without side
information $Y$. Thus, Kelly's result gives an interpretation of
mutual information $I(X;Y)$ as the financial value of side information
$Y$ for gambling in the horse race $X$.

In order to tackle problems arising in information systems with
causally dependent components, Massey~\cite{Massey90} introduced the
notion of directed information, defined as
\begin{equation}\label{e_def_directed}
I(X^n\to Y^n) := \sum_{i=1}^n I(X^i;Y_i|Y^{i-1}),
\end{equation}
and showed that the normalized maximum directed information upper
bounds the capacity of channels with feedback.  Subsequently, it was
shown that Massey's directed information and its variants indeed
characterize the capacity of feedback and two-way
channels~\cite{Kramer98, Kramer03, PermuterWeissmanGoldsmith09,
  TatikondaMitter_IT09, Kim07_feedback, PermuterWeissmanChenMAC_IT09,
  ShraderPemuter07ISIT} and the rate distortion function with
feedforward \cite{Pradhan07Venkataramanan}.  Note that directed
information \eqref{e_def_directed} can be rewritten as
\begin{equation}\label{e_kim_identity}
I(X^n \to Y^n) = \sum_{i=1}^n I(X_i; Y_i^n | X^{i-1}, Y^{i-1}),
\end{equation}
each term of which corresponds to the achievable rate at time $i$
given side information $(X^{i-1}, Y^{i-1})$ (refer to
\cite{Kim07_feedback} for the details).

The main contribution of this paper is showing that directed
information has a natural interpretation in portfolio theory,
compression, and statistics when causality constraints exist. In stock
market investment (Sec.~\ref{s_portfolio_throey}), directed
information between the stock price $X$ and side information $Y$ is an
upper bound on the increase in growth rates due to {\em causal} side
information. This upper bound is tight when specialized to gambling in
horse races. In data compression (Sec.~\ref{s_data_compreession}) we
show that directed information characterizes the value of {causal}
side information in instantaneous compression, and it quantifies the
role of {causal} inference in joint compression of two stochastic
processes. In hypothesis testing (Sec.~\ref{s_hypothesis_testing}) we
show that directed information is the exponent of the minimum type~II
error probability when one is to decide if $Y_i$ has a {causal
  influence} on $X_i$ or not. Finally, we introduce the notion of
directed Lautum\footnote{{\it Lautum }(``elegant" in Latin) is the
  reverse spelling of ``mutual'' as aptly coined
  in~\cite{Palomar_verdu08LautumInformation}.}  information
(Sec.~\ref{s_lautum}), which is a causal extension of the notion of
Lautum information introduced by Palomar and Verd{\'u}
\cite{Palomar_verdu08LautumInformation}. We briefly discuss its role
in horse race gambling, data compression, and hypothesis testing.

\section{Preliminaries: Directed Information and Causal Conditioning}
\label{sec: directed information}

Throughout this paper, we use the {\it causal conditioning} notation
$(\cdot||\cdot)$ developed by Kramer~\cite{Kramer98}. We denote by
$p(x^n||y^{n-d})$ the probability mass function of $X^n = (X_1,\ldots,
X_n)$ \emph{causally conditioned} on $Y^{n-d}$ for some integer $d\geq
0$, which is defined as
\begin{equation} \label{e_causal_cond_def}
p(x^n||y^{n-d}) := \prod_{i=1}^{n} p(x_i|x^{i-1},y^{i-d}).
\end{equation}
(By convention, if $i < d$, then $x^{i-d}$ is set to null.)  In
particular, we use extensively the cases $d=0,1$:
\begin{align}
p(x^n||y^{n}) &:= \prod_{i=1}^{n}p(x_i|x^{i-1},y^{i}),\\
p(x^n||y^{n-1}) &:= \prod_{i=1}^{n} p(x_i|x^{i-1},y^{i-1}).
\end{align}
Using the chain rule, we can easily verify that
\begin{equation}\label{e_chain_rule}
p(x^n,y^n)=p(x^n||y^{n})p(y^n||x^{n-1}).
\end{equation}

The \emph{causally conditional} entropy $H(X^n||Y^n)$ and
$H(X^n||Y^{n-1})$ are defined respectively as
\begin{align}
H(X^n||Y^n) &:= E[\log p(X^n||Y^n)]=\sum_{i=1}^n H(X_i|X^{i-1},Y^i),
\nonumber\\
H(X^n||Y^{n-1}) &:= E[\log p(X^n||Y^{n-1})]=\sum_{i=1}^n H(X_i|X^{i-1},Y^{i-1}).
\label{e_causal_cond_entropy}
\end{align}
Under this notation, directed information defined in
(\ref{e_def_directed}) can be rewritten as
\begin{equation}\label{e_def_dir_diff_entrop_alt}
I(Y^n \to X^n)= H(X^n) - H(X^n||Y^n),
\end{equation}
which hints, in a rough analogy to mutual information, a possible
interpretation of directed information $I(Y^n \to X^n)$ as the amount
of information that causally available side information $Y^n$ can provide
about $X^n$.

Note that the channel capacity results \cite{Massey90,Kramer98,
  Kramer03, PermuterWeissmanGoldsmith09, TatikondaMitter_IT09,
  Kim07_feedback, PermuterWeissmanChenMAC_IT09, ShraderPemuter07ISIT}
involve the term $I(X^n\to Y^n)$, which measures the amount of
information transfer over the forward link from $X^n$ to $Y^n$.  In
gambling, however, the increase in growth rate is due to side
information (the backward link), and therefore the expression
$I(Y^n\to X^n)$ appears. Throughout the paper we also use the notation
$I(Y^{n-1}\to X^n)$ which denotes the directed information from the
vector $(\emptyset, Y^{n-1})$, i.e., the null symbol followed by
$Y^{n-1}$, to the vector to $X^n$, that is,\[
I(Y^{n-1} \to X^n) = \sum_{i=2}^n I(Y^{i-1}; X_i| X^{i-1}).
\]
Using the causal conditioning notation, given in
(\ref{e_causal_cond_entropy}), the directed information $I(Y^{n-1}\to
X^n)$ can be written as
 \begin{equation}\label{e_def_dir_diff_entrop2}
I(Y^{n-1} \to X^n)= H(X^n) - H(X^n||Y^{n-1}).
\end{equation}
 Directed information (in both directions) and mutual information obey
 the following conservation
 law
\begin{equation}\label{e_conservation_law} I(X^n; Y^n) =
   I(X^n\to Y^n) + I(Y^{n-1}\to X^n),
\end{equation}
which was shown by Massey and Massey\cite{Massey05}. The conservation
law is a direct consequence of the chain rule (\ref{e_chain_rule}),
and we show later in Sec.~\ref{s_cost_mismatch} that it has a natural
interpretation as a conservation of a mismatch cost in data
compression.

The causally conditional entropy rate of a random process $X$ given
another random process $Y$ and the directed information rate from $X$
to $Y$ are defined respectively as
\begin{equation}\label{e_def_entropy_rate}
\mathcal H(X||Y) := \lim_{n\to\infty} \frac{H(X^n||Y^n)}{n},
\end{equation}
\begin{equation}\label{e_def_directed_rate}
\mathcal I(X \to Y) := \lim_{n\to\infty} \frac{I(X^n \to Y^n)}{n},
\end{equation}
when these limits exist.
In particular, when $(X,Y)$ is stationary ergodic, both quantities
are well-defined, namely, the limits in (\ref{e_def_entropy_rate})
and (\ref{e_def_directed_rate}) exist \cite[Properties 3.5 and
3.6]{Kramer98}.

\section{Portfolio theory}\label{s_portfolio_throey}
Here we show that directed information is an upper bound on the
increment in growth rates of optimal portfolios in a stock market
due to {causal} side information. We start by considering a special
case where the market is a horse race gambling and show that the
upper bound is tight. Then we consider a general stock market
investment.

\subsection{Horse Race Gambling with Causal Side Information}
\label{s_horse_race}
Assume that there are $m$ racing horses and let $X_i$ denote the
winning horse at time $i$, i.e., $X_i \in \mathcal{X} :=
\{1,2,\ldots,m\}$.  At time $i$, the gambler has some side information
which we denote as $Y_i$. We assume that the gambler invests all
his/her capital in the horse race as a function of the previous horse
race outcomes $X^{i-1}$ and side information $Y^i$ up to time $i$. Let
$b(x_i|x^{i-1},y^{i})$ be the portion of wealth that the gambler bets
on horse $x_i$ given $X^{i-1} = x^{i-1}$ and $Y^i = y^{i}$. Obviously,
the gambling scheme should satisfy $b(x_i|x^{i-1},y^{i})\geq 0$ and
$\sum_{x_i\in \mathcal{X}} b(x_i|x^{i-1},y^{i})=1$ for any history
$(x^{i-1},y^i)$. Let $o(x_i|x^{i-1})$ denote the odds of a horse $x_i$
given the previous outcomes $x^{i-1}$, which is the amount of capital
that the gambler gets for each unit capital that the gambler invested
in the horse.  We denote by $S(x^n||y^n)$ the gambler's wealth after
$n$ races with outcomes $x^n$ and causal side information
$y^n$. Finally, $\frac{1}{n} W(X^n||Y^n)$ denotes the {\it growth
  rate} of wealth, where the growth $W(X^n||Y^n)$ is defined as the
expectation over the logarithm (base 2) of the gambler wealth, i.e.,
\begin{equation}\label{e_growth_rate_def}
W(X^n||Y^n):= E[\log S(X^n||Y^n)].
\end{equation}

Without loss of generality, we assume that the gambler's initial
wealth $S_0$ is 1. We assume that at any time $n$ the gambler invests
all his/her capital and therefore we have
\begin{equation}
S(X^n||Y^n)=b(X_{n}|X^{n-1},Y^{n})o(X_n|X^{n-1})S(X^{n-1}||Y^{n-1}).
\end{equation}
This also implies that
\begin{equation}
S(X^n||Y^n)=\prod_{i=1}^n b(X_{i}|X^{i-1},Y^{i})o(X_i|X^{i-1}).
\end{equation}
The following theorem establishes the investment strategy for
maximizing the growth.
\begin{theorem}[Optimal causal gambling]\label{t_gamble_all_money}
For any finite horizon $n$, the maximum growth is achieved when
the gambler invests the money proportional to the causally conditional
distribution of the horse race outcome, i.e.,
\begin{equation}\label{e_b=p}
b(x_i|x^{i-1},y^{i})=p(x_i|x^{i-1},y^{i})\quad \forall
i\in\{1,...,n\}, x^i\in \mathcal X^i, y^i\in \mathcal Y^i,
\end{equation}
and the maximum growth, denoted by $W^*(X^n||Y^n)$, is
\begin{equation}\label{e_growth_res}
W^*(X^n||Y^n):=\max_{\{b(x_i|x^{i-1},y^{i-1})\}_{i=1}^n}W(X^n||Y^n)= E[\log
o(X^n)]-H(X^n||Y^n).
\end{equation}
\end{theorem}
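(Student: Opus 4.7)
The plan is to take the logarithm of the product expression for $S(X^n||Y^n)$, split off the odds (which do not depend on the betting strategy), and then maximize the remaining term-by-term using nonnegativity of Kullback--Leibler divergence.

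First I would write
\begin{equation*}
W(X^n||Y^n) = E[\log S(X^n||Y^n)] = E[\log o(X^n)] + \sum_{i=1}^n E\bigl[\log b(X_i|X^{i-1},Y^{i})\bigr],
\end{equation*}
where $\log o(X^n):=\sum_i \log o(X_i|X^{i-1})$. The first term is a property of the market, independent of the gambler's choice, so the maximization reduces to choosing the conditional distributions $b(\cdot|x^{i-1},y^i)$ to maximize the sum.

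Next I would treat each summand separately. Conditioning on $(X^{i-1},Y^i)=(x^{i-1},y^i)$, the inner expectation is
\begin{equation*}
\sum_{x_i} p(x_i|x^{i-1},y^{i}) \log b(x_i|x^{i-1},y^{i}) = -H(X_i|X^{i-1}=x^{i-1},Y^i=y^i) - D\bigl(p(\cdot|x^{i-1},y^i)\,\|\,b(\cdot|x^{i-1},y^i)\bigr).
\end{equation*}
Since the divergence is nonnegative and vanishes exactly when $b(\cdot|x^{i-1},y^i)=p(\cdot|x^{i-1},y^i)$, and since the choices at different histories $(x^{i-1},y^i)$ and at different times $i$ are unconstrained relative to each other (the only constraints are that each $b(\cdot|x^{i-1},y^i)$ be a probability mass function), the joint maximum is attained simultaneously by the proportional-betting rule \eqref{e_b=p}.

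Finally, averaging over $(X^{i-1},Y^i)$ and summing over $i$, the optimized summands telescope into $-\sum_i H(X_i|X^{i-1},Y^i) = -H(X^n||Y^n)$ by the definition \eqref{e_causal_cond_entropy} of causally conditional entropy, giving $W^*(X^n||Y^n)=E[\log o(X^n)]-H(X^n||Y^n)$. There is no real obstacle here; the only subtle point is verifying that the pointwise maximization is legitimate, which follows because the family $\{b(\cdot|x^{i-1},y^i)\}$ is parameterized by independent simplex-valued choices, one per history, so no coupling across $i$ or across histories is imposed by the causal-investment setup.
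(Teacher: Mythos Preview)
Your proof is correct and follows essentially the same approach as the paper: split off the odds term and show that $E[\log b(X^n||Y^n)]\le -H(X^n||Y^n)$ with equality at proportional betting. The only cosmetic difference is that you apply nonnegativity of KL divergence term-by-term after conditioning on each history $(x^{i-1},y^i)$, whereas the paper applies a single Jensen step to the whole product $b(x^n||y^n)/p(x^n||y^n)$ and then uses the causal chain rule to sum the resulting expression to~$1$.
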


\medskip

Note that since $\{p(x_i|x^{i-1},y^{i})\}_{i=1}^n$ uniquely
determines $p(x^n||y^{n})$, and since
$\{b(x_i|x^{i-1},y^{i})\}_{i=1}^n$ uniquely determines
$b(x^n||y^{n})$, then (\ref{e_b=p}) is equivalent to
\begin{equation}
b(x^n||y^n)\equiv p(x^n||y^n).
\end{equation}

\begin{proof}
Consider
\begin{align}
W^*(X^n||Y^n)
&=\max_{b(x^{n}||y^n)} E[\log
b(X^n||Y^n)o(X^n)]\nonumber \\
&= \max_{b(x^{n}||y^n)} E[\log
b(X^n||Y^n)]+  E[\log o(X^n)]\nonumber \\
&=-H(X^n||Y^n)+ E[\log o(X^n)].
\end{align}
Here the last equality is achieved by choosing
$b(x^{n}||y^n)=p(x^{n}||y^n),$ and it is justified by the following
upper bound:
\begin{align}
E[\log b(X^n||Y^n)]
&=\sum_{x^n,y^n
}p(x^n,y^n)\left[ \log p(x^n||y^n)+\log\frac{b(x^n||y^n)}{p(x^n||y^n)}\right] \nonumber \\
&=-H(X^n||Y^n)+\sum_{x^n,y^n }p(x^n,y^n)\log\frac{b(x^n||y^n)}{p(x^n||y^n)}\nonumber \\
&\stackrel{(a)}{\leq}-H(X^n||Y^n)+\log\sum_{x^n,y^n
}p(x^n,y^n)\frac{b(x^n||y^n)}{p(x^n||y^n)}\nonumber \\
&\stackrel{(b)}{=}-H(X^n||Y^n)+\log\sum_{x^n,y^n }p(y^n||x^{n-1})b(x^n||y^n)\nonumber \\
&\stackrel{(c)}=-H(X^n||Y^n).
\end{align}
where (a) follows from Jensen's inequality, (b) from the chain rule, and (c) from the fact
that $\sum_{x^n,y^n }p(y^n||x^{n-1})b(x^n||y^n)=1$.
\end{proof}

In  case that the odds are fair, i.e., ${o}(X_i|X^{i-1})=1/m$,
\begin{equation}
W^*(X^n||Y^n)= n\log m-H(X^n||Y^n),
\end{equation}
and thus the sum of the growth rate and the entropy of the horse
race process conditioned causally on the side information is
constant, and one can see a duality between $H(X^n||Y^n)$ and
$W^*(X^n||Y^n)$.

Let us define $\Delta W(X^n||Y^n)$ as the increase in the growth due to causal side information, i.e.,
\begin{equation}
\Delta W(X^n||Y^n)=W^*(X^n||Y^n)-W^*(X^n).
\end{equation}
%[DEFINE $W^*(X^n)$.]

\begin{corollary}[Increase in the growth rate] \label{c_increase_double_rate}
%{(\it Increase in growth rate.)}
The increase in growth rate due to the causal side information
sequence $Y^n$ for a horse race sequence $X^n$ is
\begin{equation}
\frac{1}{n}\Delta W(X^n||Y^n)=\frac{1}{n}I(Y^n\to X^n).
\end{equation}
\end{corollary}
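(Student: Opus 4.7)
The plan is to apply Theorem~\ref{t_gamble_all_money} twice and then invoke the identity already recorded in the Preliminaries. Concretely, applying the theorem in the presence of causal side information gives
\begin{equation*}
W^*(X^n\|Y^n) = E[\log o(X^n)] - H(X^n\|Y^n),
\end{equation*}
while applying it in the degenerate case where $Y$ is absent (equivalently, setting the side-information alphabet to a singleton) yields Kelly's classical formula
\begin{equation*}
W^*(X^n) = E[\log o(X^n)] - H(X^n).
\end{equation*}

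Subtracting the two, the odds term $E[\log o(X^n)]$ cancels because the odds $o(x_i|x^{i-1})$ depend only on the horse race history and not on the side information. What remains is
\begin{equation*}
\Delta W(X^n\|Y^n) = H(X^n) - H(X^n\|Y^n),
\end{equation*}
which by \eqref{e_def_dir_diff_entrop_alt} is exactly $I(Y^n \to X^n)$. Dividing by $n$ gives the stated corollary.

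There is really no obstacle here: the entire content of the corollary is the observation that the ``reference'' growth $W^*(X^n)$ and the causally-informed growth $W^*(X^n\|Y^n)$ share the same odds term, so Theorem~\ref{t_gamble_all_money} converts the growth-rate difference directly into an entropy difference, and the causal conditioning in the optimal betting strategy \eqref{e_b=p} is precisely what makes that entropy difference the directed information rather than the ordinary mutual information. The only mild point worth noting in the write-up is the justification that the $Y$-free version of Theorem~\ref{t_gamble_all_money} is indeed a special case (or equivalently, that the proof of Theorem~\ref{t_gamble_all_money} specializes verbatim when $Y^n$ is taken to be a constant).
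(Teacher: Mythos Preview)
Your proposal is correct and follows essentially the same route as the paper: apply Theorem~\ref{t_gamble_all_money} with and without side information, cancel the common odds term $E[\log o(X^n)]$, and identify the resulting entropy difference $H(X^n)-H(X^n\|Y^n)$ with $I(Y^n\to X^n)$ via \eqref{e_def_dir_diff_entrop_alt}. The paper's proof is terser (it skips the explicit remark about the degenerate $Y$-free case), but the argument is the same.
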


As a special case, if the horse race outcome and side information are
pairwise i.i.d., then the (normalized) directed information
$\frac{1}{n}I(Y^n\to X^n)$ becomes the single letter mutual
information $I(X;Y)$, which coincides with Kelly's
result~\cite{Kelly56}.
\begin{proof}
From the definition of directed information
(\ref{e_def_dir_diff_entrop_alt}) and Theorem \ref{t_gamble_all_money} we
obtain
\begin{align*}
W^*(X^n||Y^n)-W^*(X^n)&= -H(X^n||Y^n)+H(X^n)=
I(Y^n\to X^n).
\end{align*}
\end{proof}

\begin{example}[Gambling in a Markov horse race process with causal side
information]\label{ex1}
Consider the case in which two horses are racing, and the winning
horse $X_i$ behaves as a Markov process as shown in
Fig.~\ref{f_Markov}. A horse that won will win again with probability
$p$ and lose with probability $1-p$ ($0\leq p\leq 1$). At time zero, we assume that the
two horses have equal probability of wining.  The side information
revealed to the gambler at time $i$ is $Y_i$, which is a noisy
observation of the horse race outcome $X_i$. It has probability $1-q$
of being equal to $X_i$, and probability $q$ of being different from
$X_i$. In other words, $Y_i=X_i+ V_i \mod 2$, where $V_i$ is a
Bernoulli($q$) process.

\begin{figure}[h!]{\footnotesize
\psfrag{Horse}[][][1]{Horse} \psfrag{Wins}[][][1]{wins}
\psfrag{1}[][][1]{$1$} \psfrag{2}[][][1]{$2$}
\psfrag{P}[][][1]{$p$} \psfrag{a}[][][1]{$1\!-\!p$}
\psfrag{Y1}[][][1]{$Y^n$} \psfrag{G1}[][][1]{}
\psfrag{G2}[][][0.9]{$X\!=\!1$} \psfrag{G3}[][][1]{$$}
\psfrag{T1}[][][1]{Horse 1 wins}% \psfrag{G3}[][][1]{$$}
\psfrag{H2}[][][0.9]{$X\!=\!2$}
\psfrag{T2}[][][1]{Horse 2 wins}% \psfrag{G3}[][][1]{$$}
\psfrag{c}[][][1]{1}% \psfrag{G3}[][][1]{$$}
\psfrag{d}[][][1]{2}% \psfrag{G3}[][][1]{$$}
\psfrag{X}[][][1]{$X$} \psfrag{Y}[][][1]{$Y$} \psfrag{T3}[][][1]{}
\psfrag{T4}[][][1]{} \psfrag{q}[][][1]{$q$}
\psfrag{q2}[][][1]{$1-q$}

\centerline{\includegraphics[width=6.6cm]{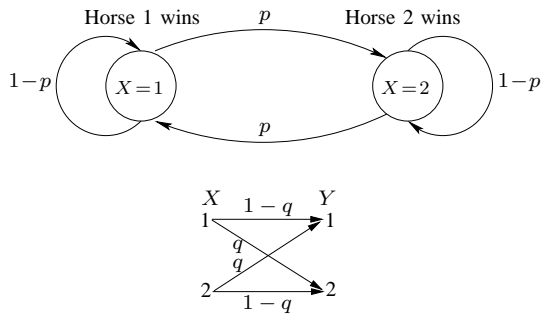}}
\caption{The setting of Example 1. The winning horse $X_i$ is
represented as a Markov process with two states. In state 1, horse
number 1 wins, and in state 2, horse number 2 wins. The side
information, $Y_i$, is a noisy observation of the winning horse,
$X_i$. } \label{f_Markov} }
\end{figure}

For this example, the increase in growth rate due to side
information $\Delta W:=\frac{1}{n}\Delta W(X^n||Y^n)$ is
\begin{align}\label{e_markov_qp}
\Delta W&=  h(p*q)-h(q),
\end{align}
where
$h(x):=-x\log x-(1-x)\log (1-x)$ is the binary entropy function, and $p*q=(1-p)q+(1-q)p$ denotes the parameter
of a Bernoulli distribution that results from convolving two
Bernoulli distributions with parameters $p$ and $q$.
\end{example}

The increase $\Delta W$ in the growth rate can be readily derived
using the identity in (\ref{e_kim_identity}) as follows:
\begin{align}\label{e_kim}
\frac{1}{n} I(Y^n\to X^n)
&\stackrel{(a)}{=} \frac{1}{n}\sum_{i=1}^n
I(Y_i;X_i^n|X^{i-1},Y^{i-1}) \nonumber \\
&\stackrel{(b)}{=} H(Y_1|X_{0})-H(Y_1|X_1),
\end{align}
where equality (a) is the identity from (\ref{e_kim_identity}), which can be easily verified by the chain rule for mutual
information \cite[eq.~(9)]{Kim07_feedback}, and (b) is due to the
stationarity of the process.

If the side information is known with some lookahead
$k\geq 0$, meaning that at time $i$ the gambler knows
$Y^{i+k}$, then the increase in growth rate is given by
\begin{align}
\Delta W&= \lim_{n\to \infty} \frac{1}{n} I(Y^{n+k}\to X^n) \nonumber \\
&\stackrel{}{=} H(Y_{k+1}|Y^{k},X_{0})-H(Y_1|X_1),
\end{align}
where the last equality is due to the same arguments as (\ref{e_kim}).
If the entire side information sequence $(Y_1,Y_2,\ldots)$ is known to
the gambler ahead of time, then since the sequence
$H(Y_{k+1}|Y^{k-1},X_{0})$ converges to the entropy rate of the
process, we obtain mutual information~\cite{Kelly56} instead of
directed information, i.e.,
\begin{align}
\Delta W&= \lim_{n\to \infty} \frac{1}{n} I(Y^{n};X^n)\nonumber\\
&=\lim_{n\to \infty} \frac{H(Y^n)}{n} -H(Y_1|X_1).
\end{align}

\subsection{Investment in a Stock Market with Causal Side Information}
We  use notation similar to the one in \cite[ch. 16]{CovThom06}. A stock
market at time $i$ is represented by a vector ${\bf
  X_i}=(X_{i1},X_{i2},\ldots,X_{im})$, where $m$ is the number of
stocks, and the {\it price relative $X_{ik}$} is the ratio of the
price of stock-$k$ at the end of day $i$ to the price of stock-$k$ at
the beginning of day $i$.  Note that gambling in a horse race is an
extreme case of stock market investment---for horse races, the price
relatives are all zero except one.

We assume that at time $i$ there is side information $Y^i$ that is
known to the investor.  A {\it portfolio} is an allocation of wealth
across the stocks.  A nonanticipating or causal portfolio strategy with
causal side information at time $i$ is denoted as ${\bf b}({\bf
  x}^{i-1},y^i)$, and it satisfies $\sum_{k=1}^m b_{k}({\bf
  x}^{i-1},y^i) =1$ and $b_{k}({\bf x}^{i-1},y^i)\geq 0$ for all
possible $({\bf x}^{i-1},y^i)$. We define $S({\bf x}^n||y^n)$ to be
the wealth at the end of day $n$ for a stock sequence ${\bf x}^n$ and
causal side information $y^n$. We have
\begin{equation}
S({\bf x}^n||y^n) = \left ({\bf b}^t({\bf x}^{n-1},y^n)\cdot  {\bf
x}_n\right ) S({\bf x}^{n-1}||y^{n-1}),
\end{equation}
where ${\bf b}^t \cdot{\bf x}$ denotes inner product between the two (column)
vectors ${\bf b}$ and ${\bf x}$.
The goal is to maximize the growth
\begin{equation}
W({\bf X}^n||Y^n):=E[\log S({\bf X}^n||Y^n) ].
\end{equation}
The justifications for maximizing the growth rate is due to
\cite[Theorem 5]{Algoet88} that such a portfolio strategy will exceed
the wealth of any other strategy to the first order in the exponent
for almost every sequence of outcomes from the stock market, namely,
if ${S^*({\bf X}^n||Y^n)}$ is the wealth corresponding to the growth
rate optimal return, then
\begin{equation}
\limsup_{n} \frac{1}{n} \log\left(\frac{S({\bf X}^n||Y^n)}{S^*({\bf
X}^n||Y^n)}\right)  \leq 0\quad\text{a.s.}
\end{equation}

Let us define
\begin{equation}
W({\bf X}_n|{\bf X}^{n-1},Y^n):=E[\log ({\bf b}^t({\bf X}^{n-1},Y^n)
{\bf X}_n)].
\end{equation}
From this definition follows the chain rule:
\begin{equation}
W({\bf X}^n||Y^n)=\sum_{i=1}^n W({\bf X}_i|{\bf X}^{i-1},Y^i),
\end{equation}
from which we obtain
\begin{align}\label{e_maxWn_as_maxmize_each_one_market}
\max_{\{{\bf b}({\bf x}^{i-1},y^i)\}_{i=1}^n}W({\bf
X}^n||Y^n)&=\sum_{i=1}^n \max_{{\bf b}({\bf x}^{i-1},y^i)}
W({\bf X}_i|X^{i-1},Y^{i})\nonumber \\
&=\sum_{i=1}^n \int_{{\bf x}^{i-1},y^i}f({\bf
x}^{i-1},y^i)\max_{{\bf b}({\bf x}^{i-1},y^i)} W({\bf X}_i|{\bf
x}^{i-1},y^{i}),
\end{align}
where $f({\bf x}^{i-1},y^i)$ denotes the probability density function
of $({\bf x}^{i-1},y^i)$. The maximization in
(\ref{e_maxWn_as_maxmize_each_one_market}) is equivalent to the
maximization of the growth rate for the memoryless case where the
cumulative distribution function of the stock-vector ${\bf X}$ is
$P({\bf X}\leq {\bf x})=\Pr({\bf X}_i\leq {\bf x}|{\bf x}^{i-1},{
  y}^i)$ and the portfolio ${\bf b} = {\bf b}({\bf x}^{i-1},y^i)$ is a
function of $(x^{i-1}, y^i)$, i.e.,
\begin{align}\label{e_memoryless_opt}
\text{maximize } &
E[\log({\bf b}^t {\bf X})|X^{i-1} = x^{i-1}, Y^i = y^i]\nonumber \\
\text{subject to } &
 \sum_{i=1}^m b_k=1, \nonumber \\
&  b_k\geq0, \forall k\in [1,2,\ldots,m].
\end{align}

In order to upper bound the difference in growth rate due to {causal}
side information we recall the following result %use Theorem
 which bounds the loss in growth rate incurred by optimizing
the portfolio with respect to a wrong distribution $g({\bf x})$ rather than
the true distribution $f({\bf x})$.

\begin{theorem}[\!{\cite[Theorem 1]{BarCov1988}}]\label{t_diff_stock}
Let $f({\bf x})$ be the probability density function of a stock vector
${\bf X}$, i.e., ${\bf X}\sim f({\bf x})$. Let ${\bf b}_f$ be the
growth rate portfolio corresponding to $f({\bf x})$, and let ${\bf
  b}_g$ be the growth rate portfolio corresponding to another density
$g({\bf x})$. Then the increase in optimal growth rate $\Delta W$ by
using ${\bf b}_f$ instead of ${\bf b}_g$ is bounded by
\begin{equation}
\Delta W= E[\log ({\bf b}^t_f {\bf X})]-E[\log ({\bf b}^t_g {\bf
X})]\leq D(f||g),
\end{equation}
where $D(f||g):=\int f(x)\log \frac{f(x)}{g(x)} dx$ denotes the
Kullback--Leibler divergence between the probability density
functions $f$ and $g$.
\end{theorem}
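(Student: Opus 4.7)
The plan is to reduce the inequality to two easy ingredients: Jensen's inequality and the Kuhn--Tucker optimality conditions for the log-optimal portfolio. I will work throughout with expectations under the \emph{true} density $f$, since $\mathbf{b}_f$ is applied to a market distributed according to $f$.

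First I would rewrite the target quantity as a single expectation under $f$:
\begin{equation*}
\Delta W - D(f\|g) = \int f(\mathbf{x})\log\frac{\mathbf{b}_f^t\mathbf{x}}{\mathbf{b}_g^t\mathbf{x}}\,d\mathbf{x} - \int f(\mathbf{x})\log\frac{f(\mathbf{x})}{g(\mathbf{x})}\,d\mathbf{x} = \int f(\mathbf{x})\log\frac{\mathbf{b}_f^t\mathbf{x}\,g(\mathbf{x})}{\mathbf{b}_g^t\mathbf{x}\,f(\mathbf{x})}\,d\mathbf{x},
\end{equation*}
so it suffices to show that this last integral is $\le 0$. Applying Jensen's inequality to the concave function $\log$ (exactly as in the proof of Theorem~\ref{t_gamble_all_money}) pushes $\log$ outside and simplifies the integrand against $f(\mathbf{x})$:
\begin{equation*}
\int f(\mathbf{x})\log\frac{\mathbf{b}_f^t\mathbf{x}\,g(\mathbf{x})}{\mathbf{b}_g^t\mathbf{x}\,f(\mathbf{x})}\,d\mathbf{x} \le \log\int g(\mathbf{x})\,\frac{\mathbf{b}_f^t\mathbf{x}}{\mathbf{b}_g^t\mathbf{x}}\,d\mathbf{x}.
\end{equation*}
It thus remains to show $E_g[\mathbf{b}_f^t\mathbf{X}/\mathbf{b}_g^t\mathbf{X}]\le 1$, from which the logarithm is nonpositive and the bound follows.

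For this last step I would invoke the Kuhn--Tucker conditions for the concave problem \eqref{e_memoryless_opt} in the memoryless, side-information-free form (which the optimization defining $\mathbf{b}_g$ under $g$ is). These conditions state that $E_g[X_k/\mathbf{b}_g^t\mathbf{X}]\le 1$ for every $k$, with equality whenever the optimal weight $b_{g,k}$ is strictly positive. Linearity in $\mathbf{b}_f$ then yields
\begin{equation*}
E_g\!\left[\frac{\mathbf{b}_f^t\mathbf{X}}{\mathbf{b}_g^t\mathbf{X}}\right] = \sum_{k=1}^m b_{f,k}\,E_g\!\left[\frac{X_k}{\mathbf{b}_g^t\mathbf{X}}\right] \le \sum_{k=1}^m b_{f,k} = 1,
\end{equation*}
which closes the argument.

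The only delicate point is justifying the Kuhn--Tucker condition; this is where I would spend the most care. The growth functional $\mathbf{b}\mapsto E_g[\log(\mathbf{b}^t\mathbf{X})]$ is concave on the simplex $\{\mathbf{b}\ge 0,\ \sum_k b_k=1\}$, so first-order optimality at $\mathbf{b}_g$ reads $\nabla_{\mathbf{b}}E_g[\log(\mathbf{b}^t\mathbf{X})]\big|_{\mathbf{b}_g}\cdot(\mathbf{b}-\mathbf{b}_g)\le 0$ for every feasible $\mathbf{b}$. Computing the gradient yields $E_g[X_k/\mathbf{b}_g^t\mathbf{X}]\le E_g[\mathbf{b}_g^t\mathbf{X}/\mathbf{b}_g^t\mathbf{X}]=1$, which is exactly what we need. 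Standard regularity (e.g.\ boundedness of price relatives or a mild integrability assumption ensuring finite growth rate) lets one differentiate under the integral; once this is granted, the rest of the argument is the two-line Jensen calculation above.
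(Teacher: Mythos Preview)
Your argument is correct and is precisely the classical Barron--Cover proof: rewrite $\Delta W - D(f\|g)$ as a single expectation under $f$, apply Jensen, and finish with the Kuhn--Tucker characterization $E_g[X_k/\mathbf{b}_g^t\mathbf{X}]\le 1$ of the log-optimal portfolio. The paper does not actually prove this theorem in the body; it is quoted as \cite[Theorem~1]{BarCov1988} (and, in ancillary material, the authors point to \cite[p.~622]{CovThom06} and the KKT condition \cite[Theorem~16.2.1]{CovThom06}), which is exactly the route you have taken.
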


Using Theorem \ref{t_diff_stock}, we can upper bound the increase in
growth rate due to causal side information by directed information as
shown in the following theorem.
\begin{theorem}[Upper bound on increase in growth rate]
\label{t_stock_directed_upper_bound}
The increase in optimal growth rate for a stock market sequence ${\bf
  X}^n$ due to side information $Y^n$ is upper bounded by
\begin{equation}
W^*({\bf X}^n||Y^n)- W^*({\bf X}^n) \leq I(Y^n\to {\bf X}^n),
\end{equation}
where $W^*({\bf X}^n||Y^n)\triangleq \max_{\{{\bf b}({\bf
X}^{i-1},Y^i)\}_{i=1}^n}W({\bf X}^n||Y^n)$ and $W^*({\bf X}^n):=
\max_{\{{\bf b}({\bf X}^{i-1})\}_{i=1}^n}W({\bf X}^n)$.
\end{theorem}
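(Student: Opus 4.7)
The plan is to combine the per-stage decomposition in \eqref{e_maxWn_as_maxmize_each_one_market} with the Barron--Cover mismatch bound (Theorem~\ref{t_diff_stock}), applied conditionally at each time step. First I would apply the chain-rule identity \eqref{e_maxWn_as_maxmize_each_one_market} in both settings: with causal side information it already gives
\[
W^*({\bf X}^n\|Y^n)=\sum_{i=1}^n \int f({\bf x}^{i-1},y^i)\,\max_{{\bf b}({\bf x}^{i-1},y^i)} E[\log({\bf b}^t{\bf X}_i)\mid {\bf x}^{i-1},y^i]\,d({\bf x}^{i-1},y^i),
\]
and the analogous decomposition without side information yields
\[
W^*({\bf X}^n)=\sum_{i=1}^n \int f({\bf x}^{i-1})\,\max_{{\bf b}({\bf x}^{i-1})} E[\log({\bf b}^t{\bf X}_i)\mid {\bf x}^{i-1}]\,d{\bf x}^{i-1}.
\]
So the whole theorem reduces to controlling, at each $i$, the gap between the two per-stage growth rates.

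Next, for fixed $({\bf x}^{i-1},y^i)$ I would view the optimization with side information as maximizing growth when the true price-relative law is $f(\cdot\mid{\bf x}^{i-1},y^i)$, and the optimization without side information as using the portfolio that is optimal for the mismatched law $f(\cdot\mid{\bf x}^{i-1})$ (which does not see $y^i$). Crucially, by the tower property the no-side-info per-stage growth can be rewritten as
\[
\int f({\bf x}^{i-1})\max_{{\bf b}({\bf x}^{i-1})} E[\log({\bf b}^t{\bf X}_i)\mid {\bf x}^{i-1}] \,d{\bf x}^{i-1}
=\int f({\bf x}^{i-1},y^i)\,E[\log({\bf b}_g^t{\bf X}_i)\mid {\bf x}^{i-1},y^i]\,d({\bf x}^{i-1},y^i),
\]
where ${\bf b}_g={\bf b}_g({\bf x}^{i-1})$ is the Kelly portfolio for $f(\cdot\mid{\bf x}^{i-1})$. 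Now Theorem~\ref{t_diff_stock} applies pointwise in $({\bf x}^{i-1},y^i)$ with $f=f(\cdot\mid{\bf x}^{i-1},y^i)$ and $g=f(\cdot\mid{\bf x}^{i-1})$, bounding the pointwise gap by $D\bigl(f(\cdot\mid{\bf x}^{i-1},y^i)\,\|\,f(\cdot\mid{\bf x}^{i-1})\bigr)$.

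Finally I would average this pointwise inequality against $f({\bf x}^{i-1},y^i)$, recognize the resulting expected KL divergence as the conditional mutual information $I({\bf X}_i;Y^i\mid {\bf X}^{i-1})$, and sum over $i$ to obtain
\[
W^*({\bf X}^n\|Y^n)-W^*({\bf X}^n)\;\le\;\sum_{i=1}^n I({\bf X}_i;Y^i\mid{\bf X}^{i-1})\;=\;I(Y^n\to{\bf X}^n),
\]
using the definition \eqref{e_def_directed} with the roles of the two processes swapped. The only delicate step is the rewriting in the second paragraph: one must confirm that, even after conditioning on $Y^i$, the correct mismatched portfolio to plug into Theorem~\ref{t_diff_stock} is the $Y^i$-oblivious Kelly portfolio ${\bf b}_g({\bf x}^{i-1})$; this is an application of iterated expectation but is easy to get wrong because the per-stage maximizers on the two sides optimize over different function classes.
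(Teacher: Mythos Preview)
Your proposal is correct and follows essentially the same route as the paper's proof: decompose via \eqref{e_maxWn_as_maxmize_each_one_market}, apply Theorem~\ref{t_diff_stock} at each stage with $f=f(\cdot\mid{\bf x}^{i-1},y^i)$ and $g=f(\cdot\mid{\bf x}^{i-1})$, then average and sum the resulting conditional KL terms to obtain $I(Y^n\to{\bf X}^n)$. Your explicit tower-property rewriting---so that the mismatched portfolio ${\bf b}_g({\bf x}^{i-1})$ is evaluated under the conditional law $f(\cdot\mid{\bf x}^{i-1},y^i)$ before Theorem~\ref{t_diff_stock} is invoked pointwise---is in fact more careful than the paper's one-line justification of step~(a).
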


\begin{proof} Consider
\begin{align}
\lefteqn{W^*({\bf X}^n||Y^n)- W^*({\bf X}^n)}\nonumber\\
&=\sum_{i=1}^n \int_{{\bf x}^{i-1},y^i}f({\bf x}^{i-1},y^i)\left[
\max_{{\bf b}({\bf x}^{i-1},y^i)} W({\bf X}_i|{\bf
x}^{i-1},y^{i})-\max_{{\bf b_i}({\bf x}^{i-1})} W({\bf X}_i|{\bf
x}^{i-1})\right] \nonumber \\
&\stackrel{(a)}{\leq}\sum_{i=1}^n \int_{{\bf x}^{i-1},y^i}f({\bf
x}^{i-1},y^i)\left[ \int_{{\bf x}_i} f({\bf x}_i|{\bf
x}^{i-1},y^i)\log \frac{f({\bf x}_i|{\bf x}^{i-1},y^i)}{f({\bf
x}_i|{\bf
x}^{i-1})} \right] \nonumber \\
&\stackrel{}{=}\sum_{i=1}^n E\left[ \log \frac{f({\bf X}_i|{\bf
X}^{i-1},Y^i)}{f({\bf X}_i|{\bf
X}^{i-1})} \right] \nonumber \\
&\stackrel{}{=}\sum_{i=1}^n h({\bf X}_i|{\bf X}^{i-1})-h({\bf
X}_i|{\bf
X}^{i-1},Y^i) \nonumber \\
&\stackrel{}{=}I(Y^n \to {\bf X}^n),
\end{align}
where the inequality (a) is due to Theorem \ref{t_diff_stock}.
\end{proof}

Note that the upper bound in
Theorem~\ref{t_stock_directed_upper_bound} is tight for gambling in
horse races (Corollary \ref{c_increase_double_rate}).

\section{Data Compression}\label{s_data_compreession}
In this section we investigate the role of directed information in
data compression and find two interpretations:
\begin{enumerate}
\item directed information characterizes the value of causal side
  information in instantaneous compression, and
\item it also quantifies the role of causal inference in joint
  compression of two stochastic processes.
\end{enumerate}

\subsection{Instantaneous Lossless Compression with Causal Side Information}
Let $X_1, X_2 \ldots$ be a source and $Y_1, Y_2, \ldots$ be side
information about the source. The source is to be encoded losslessly
by an instantaneous code with causally available side information, as
depicted in Fig. \ref{f_causal_compression}. More formally, an {\it
  instantaneous lossless source encoder with causal side information}
consists of a sequence of mappings $\{ M_i \}_{i \geq 1}$ such that
each $M_i : \mathcal{X}^i \times \mathcal{Y}^i \mapsto \{0,1\}^*$ has
the property that for every $x^{i-1}$ and $y^i$, $M_i (x^{i-1} \cdot ,
y^i)$ is an instantaneous (prefix) code.

\begin{figure}[h!]{
\psfrag{X}[B][][1]{$X^i$}
\psfrag{M}[B][][1]{$\;\;\;\;\;\;\;\;\;\;\;M_i(X^i,Y^i)$}
\psfrag{Causal}[][][1]{Causal} \psfrag{Encoder}[][][1]{Encoder}
\psfrag{Decoder}[][][1]{Decoder\;}
\psfrag{Y}[][][1]{$\;Y^i$}
\psfrag{H}[B][][1]{$\;\;\;\;\;\;\hat X_i(M^i,Y^i)$}
\centerline{\includegraphics[width=8cm]{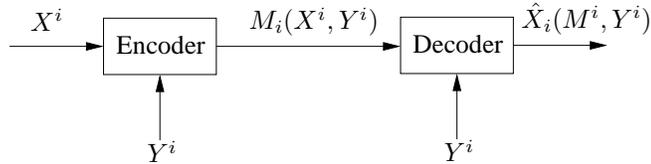}}
\caption{Instantaneous  data compression with causal side
information } \label{f_causal_compression} }
\end{figure}

An instantaneous lossless source encoder with causal side information
operates sequentially, emitting the concatenated bit stream $M_1 (X_1,
Y_1) M_2(X^2, Y^2) \ldots$. The defining property that $M_i (x^{i-1}
\cdot , y^i)$ is an instantaneous code for every $x^{i-1}$ and $y^i$
is a necessary and sufficient condition for the existence of a decoder
that can losslessly recover $x^i$ based on $y^i$ and the bit stream
$M_1 (x_1, y_1) M_2(x^2, y^2) \ldots$ as soon as it receives $M_1
(x_1, y_1) M_2(x^2, y^2) \ldots M_i(x^i, y^i)$ for all sequence pairs
$(x_1,y_1), ( x_2, y_2), \ldots$, and all $i \geq 1$. Let $L(x^n||
y^n)$ denote the length of the concatenated string $M_1 (x_1, y_1)
M_2(x^2, y^2) \ldots M_n(x^n, y^n)$. Then the following result is due
to Kraft's inequality and Huffman coding adapted to the case where
causal side information is available.
\begin{theorem}[Lossless source coding with causal side information]
\label{th: lossless insta coding with si}
Any instantaneous lossless source encoder with causal side
information satisfies
\begin{equation}\label{eq: lossless source encoder converse}
\frac{1}{n} E L(X^n|| Y^n) \geq \frac{1}{n} \sum_{i=1}^n H(X_i |
X^{i-1}, Y^i) \ \ \ \ \ \forall n \geq 1.
\end{equation}
There exists an instantaneous lossless source encoder with causal side
information satisfying
\begin{equation}\label{eq: direct part for inst lossless causal}
\frac{1}{n} E L(X^n|| Y^n) \leq  \frac{1}{n} \sum_{i=1}^n r_i + H(X_i
| X^{i-1}, Y^i)  \ \ \ \ \ \forall i \geq 1,
\end{equation}
where $r_i= \sum_{x^{i-1},y^{i}}p(x^{i-1},y^{i})\min(1, \max_{x_i}p(x_i|x^{i-1},y^{i-1})+0.086)$.
\end{theorem}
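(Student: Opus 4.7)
The plan is to handle the converse and the direct part separately, treating $M_i(x^{i-1},\cdot,y^i)$ for each fixed $(x^{i-1},y^i)$ as an ordinary instantaneous code for the conditional distribution of $X_i$ given $(X^{i-1},Y^i)=(x^{i-1},y^i)$, and then averaging and summing over $i$. This reduces the problem to the classical one-shot lossless source coding bounds applied conditionally.

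For the converse, write $L(x^n\|y^n)=\sum_{i=1}^n \ell_i(x^i,y^i)$ where $\ell_i(x^i,y^i)$ is the length of the codeword $M_i(x^i,y^i)$. For each fixed $(x^{i-1},y^i)$, the family $\{M_i(x^{i-1},x_i,y^i):x_i\in\mathcal X\}$ is a prefix code, so Kraft's inequality gives $\sum_{x_i} 2^{-\ell_i(x^{i-1},x_i,y^i)}\le 1$. By the standard Shannon lower bound applied to the conditional distribution $p(x_i\,|\,x^{i-1},y^i)$,
\begin{equation*}
\sum_{x_i} p(x_i\,|\,x^{i-1},y^i)\,\ell_i(x^{i-1},x_i,y^i)\ \ge\ H(X_i\,|\,X^{i-1}{=}x^{i-1},Y^i{=}y^i).
\end{equation*}
Taking expectation over $(X^{i-1},Y^i)$ and summing over $i$ yields \eqref{eq: lossless source encoder converse}.

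For the direct part, I would construct the encoder greedily by taking $M_i(x^{i-1},\cdot,y^i)$ to be the Huffman code for the conditional law $p(\cdot\,|\,x^{i-1},y^i)$ (with ties broken in any fixed fashion, so the decoder, who knows $y^i$ and has already recovered $x^{i-1}$, can reconstruct the same codebook). Then by Gallager's sharpened redundancy bound for Huffman codes \cite{Gal77}, for any distribution with largest probability $p_{\max}$ the expected Huffman length exceeds the entropy by at most $p_{\max}+\sigma$, with $\sigma=1-\log_2 e+\log_2\log_2 e\approx 0.086$; it is also trivially at most $1$ above the entropy. Applied conditionally on each $(x^{i-1},y^i)$ this gives a per-letter redundancy of at most $\min\bigl(1,\,\max_{x_i}p(x_i\,|\,x^{i-1},y^i)+0.086\bigr)$. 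Averaging over $(X^{i-1},Y^i)$ yields the term $r_i$ in \eqref{eq: direct part for inst lossless causal} (modulo the $y^{i-1}$ vs.\ $y^i$ notation in the statement), and summing over $i$ completes the proof.

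The only non-routine ingredient is invoking Gallager's $p_{\max}+\sigma$ redundancy bound, which I would cite rather than re-derive; everything else is just Kraft/Huffman applied conditionally followed by iterated expectation. A minor bookkeeping point that deserves care is ensuring that the Huffman codebook at step $i$ is a deterministic function of $(x^{i-1},y^i)$ so that the decoder, which has already losslessly recovered $x^{i-1}$ and observes $y^i$ causally, can parse $M_i$ on the fly; this is automatic once a canonical tie-breaking rule is fixed.
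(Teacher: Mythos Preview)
Your argument is correct and matches the paper's own proof essentially verbatim: Kraft's inequality applied conditionally for the lower bound, and Huffman coding on $p(\cdot\mid x^{i-1},y^i)$ together with Gallager's redundancy bound for the upper bound. One small correction: the reference for the $p_{\max}+0.086$ bound should be \cite{Gallager78HuffmanCode}, not \cite{Gal77}.
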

\begin{proof}
The lower bound follows from Kraft's inequality \cite[Theorem
  5.3.1]{CovThom06} and the upper bound follows from Huffman coding on
the conditional probability $p(x_i|x^{i-1},y^{i})$.  The redundancy
term $r_i$ follows from Gallager's redundancy bound
\cite{Gallager78HuffmanCode}, $\min(1, P_{i}+0.086)$, where $P_{i}$ is
the probability of the most likely source letter at time $i$,
averaged over side information sequence $(X^{i-1},Y^{i})$.
\end{proof}

Since the Huffman code achieves the entropy rate for dyadic
probability, it follows that if the conditional probability
$p(x_i|x^{i-1},y^{i-1})$ is dyadic, i.e., if each conditional
probability equals to $2^{-k}$ for some integer $k$, then (\ref{eq:
  lossless source encoder converse}) can be achieved with equality.

Theorem \ref{th: lossless insta coding with si}, combined with the
identity $\sum_{j=i}^n H(X_i | X^{i-1}, Y^i) = H(X^n) - I(Y^n
\rightarrow X^n)$, implies that the compression rate saved in optimal
sequential lossless compression due to the causal side information is
upper bounded by $\frac{1}{n}I(Y^n\to X^n)-1$, and lower bounded by
$\frac{1}{n}I(Y^n\to X^n)+1$. If all the probabilities are dyadic,
then the compression rate saving is exactly the directed information
rate $\frac{1}{n}I(Y^n\to X^n)$. This saving should be compared to
$\frac{1}{n}I(X^n;Y^n)$, which is the saving in the absence of
causality constraint.

\subsection{Cost of Mismatch in Data Compression}
\label{s_cost_mismatch}
Suppose we compress a pair of correlated sources $\{(X_i,Y_i)\}$
jointly with an optimal lossless variable length code (such as the
Huffman code), and we denote by $E(L(X^n,Y^n))$ the average length of
the code. Assume further that $Y_i$ is generated randomly by a forward
link $p(y_i|y^{i-1},x^{i})$ as in a communication channel or a
chemical reaction, and $X_i$ is generated by a backward link
$p(x_i|y^{i-1},x^{i-1})$ such as in the case of an encoder or a
controller with feedback. By the chain rule for causally conditional
probabilities (\ref{e_chain_rule}), any joint distribution can be
modeled according to Fig.~\ref{f_compression}.
\begin{figure}[h!]{
\psfrag{A1}[][][1]{$p(x_i|x^{i-1},y^{i-1})$}
\psfrag{A2}[][][1]{Backward link}
\psfrag{B1}[][][1]{$p(y_i|x^{i},y^{i-1})$}
\psfrag{B3}[][][1]{$p(y_i|y^{i-1})$}  \psfrag{B2}[][][1]{Forward
link} \psfrag{X}[][][1]{$X_i$} \psfrag{Y}[][][1]{$Y_i$}
\psfrag{Y2}[][][1]{$Y_{i-1}$} \psfrag{H1}[][][1]{}
\psfrag{H2}[][][1]{}
\centerline{\includegraphics[width=8cm]{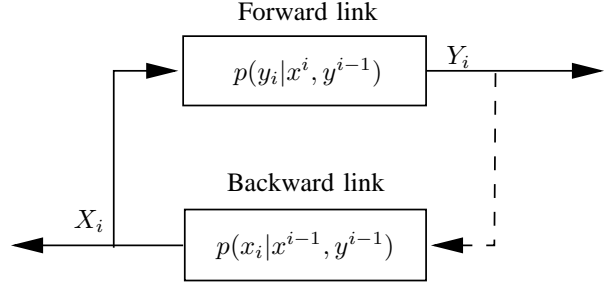}}
\caption{ {Compression of two correlated sources $\{X_i,Y_i\}_{i\geq
1 }$. Since any joint distribution can be decomposed as
$p(x^n,y^n)=p(x^n||y^{n-1})p(y^n||x^n)$, each link embraces the
existence of a forward or feedback channel (chemical reaction). We
investigate the influence of the link knowledge on joint
compression of $\{X_i,Y_i\}_{i\geq 1 }$}. \label{f_compression}} }
\end{figure}

Recall that the optimal variable-length lossless code, in which both
links are taken into account, has the average length
\[
H(X^n,Y^n)\le E(L(X^n,Y^n)) < H(X^n,Y^n) + 1.
\]
However, if the code is erroneously designed to be optimal for the
case in which the forward link does not exist, namely, the code is
designed for the joint distribution $p(y^n)p(x^n||y^{n-1})$, then the
average code length (up to 1 bit) is
\begin{align}
E(L(X^n,Y^n))&=\sum_{x^n,y^n}p(x^n,y^n)\log \frac{1}{p(y^n)p(x^n||y^n-1)}\nonumber \\
&= \sum_{x^n,y^n}p(x^n,y^n)\log \frac{p(x^n,y^n)}{p(y^n)p(x^n||y^n-1)}+H(X^n,Y^n)\nonumber \\
&=\sum_{x^n,y^n}p(x^n,y^n)\log \frac{p(y^n||x^n)}{p(y^n)}+H(X^n,Y^n)\nonumber \\
&=I(X^n\to Y^n)+H(X^n,Y^n).
\end{align}
Hence the redundancy (the gap from the minimum average code length)
is $I(X^n\to Y^n)$. Similarly, if the backward
link is ignored, then the average code length (up to 1 bit) is
\begin{align}
E(L(X^n,Y^n))&=\sum_{x^n,y^n}p(x^n,y^n)\log
\frac{1}{p(y^n||x^n)p(x^n)}\nonumber \\ &=
\sum_{x^n,y^n}p(x^n,y^n)\log
\frac{p(x^n,y^n)}{p(y^n||x^n)p(x^n)}+H(X^n,Y^n)\nonumber
\\ &=\sum_{x^n,y^n}p(x^n,y^n)\log
\frac{p(x^n||y^{n-1})}{p(x^n)}+H(X^n,Y^n)\nonumber \\ &=I(Y^{n-1}\to
X^n)+H(X^n,Y^n)
\end{align}
Hence the redundancy for this case is $I(Y^{n-1}\to X^n)$. If both
links are ignored, the redundancy is simply the mutual information
$I(X^n; Y^n)$. This result quantifies the value of knowing causal
influence between two processes when designing the optimal joint
compression. Note that the redundancy due to ignoring both links is
the sum of the redundancies from ignoring each link. This recovers the
conservation law (\ref{e_conservation_law}) operationally.

\section{Directed Information and Statistics: Hypothesis Testing}
\label{s_hypothesis_testing}

Consider a system with an input sequence $(X_1,X_2,\ldots,X_n)$ and
output sequence $(Y_1,Y_2,\ldots,Y_n)$, where the input is generated
by a stimulation mechanism or a controller, which observes the
previous outputs, and the output may be generated either causally from
the input according to $\{p(y_i|y^{i-1},x^{i})\}_{i=1}^n$ (the null
hypothesis $H_0$) or independently from the input according to
$\{p(y_i|y^{i-1})\}_{i=1}^n$ (the alternative hypothesis $H_1$). For
instance, this setting occurs in communication or biological systems,
where we wish to test whether the observed system output $Y^n$ is in
response to one's own stimulation input $X^n$ or to some other input
that uses the same stimulation mechanism and therefore induces the
same marginal distribution $p(y^n)$.  The stimulation mechanism
$p(x^n||y^{n-1})$, the output generator $p(y^n||x^n)$, and the
sequences $X^n$ and $Y^n$ are assumed to be known.

\begin{figure}[h]{
\psfrag{A1}[][][1]{$p(x_i|x^{i-1},y^{i-1})$}
\psfrag{A2}[][][1]{Controller}
\psfrag{B1}[][][1]{$p(y_i|x^{i},y^{i-1})$}
\psfrag{B3}[][][1]{$p(y_i|y^{i-1})$}
\psfrag{B2}[][][1]{Output generator}
\psfrag{X}[][][1]{$X_i$}
\psfrag{Y}[][][1]{$Y_i$}
\psfrag{Y2}[][][1]{$Y_{i-1}$}
\psfrag{H1}[][][1]{Hypothesis $H_0$:}
\psfrag{H2}[][][1]{Hypothesis $H_1$}

\centerline{\includegraphics[width=16cm]{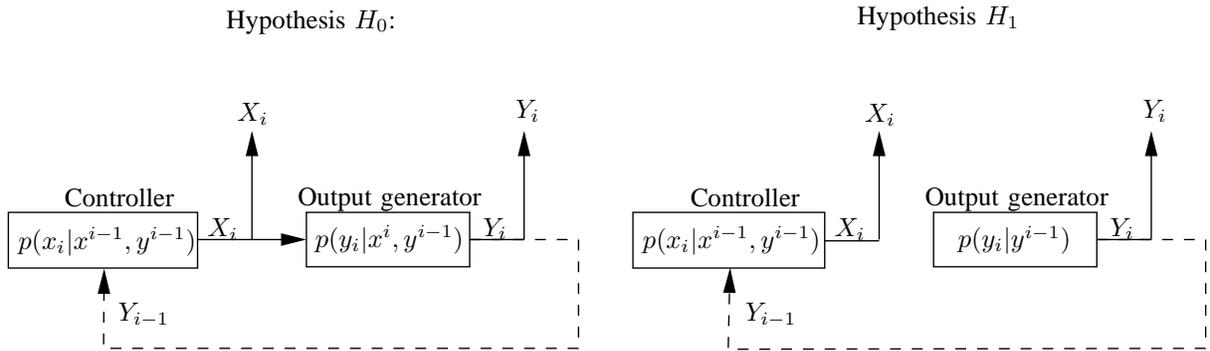}}

\caption{Hypothesis testing. $H_0$: The input sequence
  $(X_1,X_2\ldots,X_n)$ causally influences the output sequence
  $(Y_1,Y_2,\ldots,Y_n)$ through the causal conditioning distribution
  $p(y^n||x^n)$. $H_1$: The output sequence $(Y_1,Y_2,\ldots,Y_n)$ was
  not generated by the input sequence $(X_1,X_2,\ldots,X_n)$, but by
  another input from the same stimulation mechanism
  $p(x^n||y^{n-1})$. }
\label{f_hypothesis}
}\end{figure}

An {\it acceptance region} $A$ is the set of all sequences $(x^n,y^n)$
for which we accept the null hypothesis $H_0$. The complement of $A$,
denoted by $A^c$, is the rejection region, namely, the set of all
sequences $(x^n,y^n)$ for which we reject the null hypothesis $H_0$
and accept the alternative hypothesis $H_1$.  Let
\begin{equation}\label{e_error_def}
\alpha:=\Pr(A^c|H_0),\quad \beta:=\Pr(A|H_1)
\end{equation}
denote the probabilities of {\it type I error} and {\it type II
  error}, respectively.

The following theorem interprets the directed information rate
$\mathcal I(X\to Y)$ as the best error exponent of $\beta$
that can be achieved while $\alpha$ is less than some constant
$\epsilon> 0$.

\begin{theorem}[Chernoff--Stein Lemma for the causal dependence test: Type II error]
\label{t_Chernoff-Stel}
Let $(X,Y) = \{X_i,Y_i\}_{i=1}^\infty$ be a stationary and ergodic
random process.  Let $A_n\subseteq {(\mathcal X\times \mathcal Y)}^n$
be an acceptance region, and let $\alpha_n$ and $\beta_n$ be the
corresponding probabilities of type I and type II errors
(\ref{e_error_def}).  For $0<\epsilon<\frac{1}{2}$, let
\begin{equation}\label{e_beta_n}
\beta_n^{(\epsilon)}=\min_{A_n\subseteq {(\mathcal X\times \mathcal Y)}^n, \alpha_n<\epsilon} \beta_n.
\end{equation}
Then
\begin{equation}
\lim_{n\to\infty} -\frac{1}{n} \log \beta_n^{(\epsilon)}=\mathcal I(X\to Y),
\end{equation}
where the directed information rate is the one induced by the joint distribution from $H_0$, i.e., $p(x^n||y^{n-1})p(y^n||x^n)$.
\end{theorem}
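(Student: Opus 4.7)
The plan is to recognize the test as a Neyman--Pearson problem between two measures on $(\mathcal X\times\mathcal Y)^n$ and then invoke the Chernoff--Stein lemma in its generalized form for stationary ergodic sources. Under $H_0$ the joint law is $p(x^n,y^n)=p(x^n||y^{n-1})p(y^n||x^n)$, and under $H_1$ it is $q(x^n,y^n)=p(x^n||y^{n-1})p(y^n)$. The common factor $p(x^n||y^{n-1})$ cancels in the log-likelihood ratio, leaving
\begin{equation*}
\log\frac{p(X^n,Y^n)}{q(X^n,Y^n)}=\log\frac{p(Y^n||X^n)}{p(Y^n)},
\end{equation*}
whose $H_0$-expectation is precisely $H(Y^n)-H(Y^n||X^n)=I(X^n\to Y^n)$ by the identity \eqref{e_def_dir_diff_entrop_alt} applied in the other direction.

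The key technical step is a generalized Shannon--McMillan--Breiman (SMB) theorem asserting that, under $H_0$ and the stationarity and ergodicity of $(X,Y)$,
\begin{equation*}
\frac{1}{n}\log\frac{p(Y^n||X^n)}{p(Y^n)}\;\longrightarrow\;\mathcal I(X\to Y)\quad\text{a.s.}
\end{equation*}
I would derive this by applying the ordinary SMB theorem separately to the ergodic processes $\{Y_i\}$ (yielding $-\tfrac{1}{n}\log p(Y^n)\to H(Y)$ a.s.) and to the joint process with density $p(y^n||x^n)$ (yielding $-\tfrac{1}{n}\log p(Y^n||X^n)\to\mathcal H(Y||X)$ a.s., via the decomposition $H(Y^n||X^n)=\sum_i H(Y_i|Y^{i-1},X^i)$ together with stationarity and Breiman's ergodic theorem applied to the shift-invariant functions $f_i=\log p(Y_i|Y^{i-1},X^i)$ as in \cite[Property 3.6]{Kramer98}). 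The difference converges to $\mathcal I(X\to Y)$.

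With the AEP in hand the classical two-part argument goes through verbatim. For \emph{achievability}, fix $\delta>0$ and take the acceptance region
\begin{equation*}
A_n=\Bigl\{(x^n,y^n):\Bigl|\tfrac{1}{n}\log\tfrac{p(x^n,y^n)}{q(x^n,y^n)}-\mathcal I(X\to Y)\Bigr|<\delta\Bigr\}.
\end{equation*}
By the AEP $\Pr_{H_0}(A_n)\to 1$, so eventually $\alpha_n<\epsilon$; on $A_n$, $q\le p\cdot 2^{-n(\mathcal I(X\to Y)-\delta)}$ summed gives $\beta_n\le 2^{-n(\mathcal I(X\to Y)-\delta)}$. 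For the \emph{converse}, given any $\tilde A_n$ with $\alpha_n<\epsilon$, intersect it with the typical set and use $\Pr_{H_0}(\tilde A_n\cap T_n)\ge 1-\epsilon-o(1)$ together with the lower bound $q\ge p\cdot 2^{-n(\mathcal I(X\to Y)+\delta)}$ on $T_n$ to obtain $\beta_n\ge (1-\epsilon-o(1))\,2^{-n(\mathcal I(X\to Y)+\delta)}$. Letting $\delta\downarrow 0$ closes the gap.

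The main obstacle is the generalized SMB step, because $p(Y^n||X^n)$ is not the density of a jointly stationary process in the usual sense: the conditional $p(y_i|y^{i-1},x^i)$ uses only the causal past of $X$, so one cannot simply invoke a standard conditional SMB theorem. The cleanest route is to express $-\tfrac{1}{n}\log p(Y^n||X^n)$ as the Cesàro average $\tfrac{1}{n}\sum_{i=1}^n g_i$ with $g_i=-\log p(Y_i|Y^{i-1},X^i)$, extend the index to a two-sided stationary process, and apply Birkhoff's ergodic theorem to the shift-invariant limit $g_\infty=-\log p(Y_0|Y_{-\infty}^{-1},X_{-\infty}^{0})$, whose mean is $\mathcal H(Y||X)$. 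Dominating the tail in the passage from $g_i$ to $g_\infty$ via Barron's generalized AEP (or the Algoet--Cover sandwich) supplies the remaining a.s.\ convergence.
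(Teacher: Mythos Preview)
Your proposal is correct and follows essentially the same approach as the paper: both identify the log-likelihood ratio as $\log\frac{p(Y^n||X^n)}{p(Y^n)}$, invoke an AEP/SMB-type result for directed information (the paper cites \cite[Lemma~3.1]{Pradhan07Venkataramanan} rather than deriving it), define the same ``typical'' acceptance region, and run the standard Chernoff--Stein achievability and converse via intersection with the typical set and a union bound. The only cosmetic differences are that the paper states the AEP as convergence in probability rather than a.s., and cites the SMB step as a black box while you sketch its proof.
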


Theorem \ref{t_Chernoff-Stel} is reminiscent of the achievability
proof in the channel coding theorem. In the random coding achievability proof
\cite[ch 7.7]{CovThom06} we check whether the output $Y^n$ is
resulting from a message (or equivalently from an input sequence
$X^n$) and we would like to have the error exponent which is, according to
Theorem \ref{t_Chernoff-Stel}, $I(X^n\to Y^n)$ to be as large as
possible so we can distinguish as many messages as possible.

The proof of Theorem \ref{t_Chernoff-Stel} combines arguments from
the Chernoff--Stein Lemma \cite[Theorem 11.8.3]{CovThom06} with the
Shannon--McMillan--Breiman Theorem for directed information~\cite[Lemma
  3.1]{Pradhan07Venkataramanan}, which implies that for a jointly
stationary ergodic random process
\begin{equation*}
\frac{1}{n}\log \frac{p(Y^n||X^n)}{P(Y^n)} \to \mathcal I( X\to Y)  \text{ in probability.}
\end{equation*}

\begin{proof}
{\em Achievability:}
Fix $\delta>0$ and let $A_n$ be
\begin{equation}\label{e_A_n}
A_n=\left\{x^n,y^n:\left|\log \frac{p(y^n||x^n)}{p(y^n)}-\mathcal I( X\to Y)\right|<\delta  \right\}
\end{equation}
By the AEP for directed information~\cite[Lemma
  3.1]{Pradhan07Venkataramanan} we have that $\Pr (A_n|H_0)\to 1$ in
probability; hence there exists $N(\epsilon)$ such that for all
$n>N(\epsilon)$, $\alpha_n=\Pr (A_n^c|H_0)< \epsilon$. Furthermore,
\begin{align}
\beta_n&=\Pr(A_n|H_1)\nonumber \\
&=\sum_{x^n,y^n\in A_n} p(x^n||y^{n-1})p(y^n)\nonumber \\
&\stackrel{(a)}{\leq}\sum_{x^n,y^n\in A_n} p(x^n||y^{n-1})p(y^n||x^n)2^{-n(\mathcal I(X\to Y)-\delta)}\nonumber \\
&\stackrel{}{=}2^{-n(\mathcal I(X\to Y)-\delta)}\sum_{x^n,y^n\in A_n} p(x^n||y^{n-1})p(y^n||x^n)\nonumber \\
&\stackrel{(b)}{=}2^{-n(\mathcal I(X\to Y)-\delta)}(1-\alpha_n),
\end{align}
where inequality (a) follows from the definition of $A_n$ and (b) from the definition of $\alpha_n$.
We conclude that
\begin{equation}\label{e_Bn_limit}
\lim \sup _{n\to\infty}\frac{1}{n}\log \beta_n\leq -\mathcal I(X\to Y)+\delta,
\end{equation}
establishing the achievability since $\delta>0$ is arbitrary.

{\em Converse:} Let $B_n\subseteq {(\mathcal X\times \mathcal Y)}^n$
such that $\Pr(B_n^c|H_0) < \epsilon < \frac{1}{2}$. Consider
\begin{align}
\Pr(B_n|H_1)&\geq \Pr(A_n\cap B_n|H_1) \nonumber \\
&= \sum_{(x^n,y^n) \in A_n\cap B_n} p(x^n||y^{n-1})p(y^n) \nonumber \\
&\geq \sum_{(x^n,y^n) \in A_n\cap B_n} p(x^n||y^{n-1})p(y^n||x^{n-1})2^{-n(\mathcal I(X\to Y)+\delta)} \nonumber \\
&= 2^{-n(\mathcal I(X\to Y)+\delta)}\Pr(A_n\cap B_n|H_0)\nonumber \\
&= 2^{-n(\mathcal I(X\to Y)+\delta)}(1-\Pr(A_n^c\cup B_n^c|H_0))\nonumber \\
&\geq 2^{-n(\mathcal I(X\to Y)+\delta)}(1-\Pr(A_n^c|H_0)-\Pr(B_n^c|H_0))%\nonumber \\
\end{align}
Since $\Pr(A_n^c|H_0)\to 0$ and
$\Pr(B_n^c|H_0)<\epsilon<\frac{1}{2}$,
we obtain
\begin{equation}
\lim \inf_{n\to\infty}\frac{1}{n}\log \beta_n\geq -(\mathcal I(X\to
Y)+\delta).
\end{equation}
Finally, since $\delta > 0$ is arbitrary, the proof of the converse is
completed.
\end{proof}

\section{Directed Lautum Information}\label{s_lautum}
Recently, Palomar and Verd{\'u}
\cite{Palomar_verdu08LautumInformation} have defined the lautum
information $L(X^n;Y^n)$ as
\begin{equation}
L(X^n;Y^n):=\sum_{x^n,y^n}p(x^n)p(y^n)\log \frac{p(y^n)}{p(y^n|x^n)},
\end{equation}
and showed that it has operational interpretations in statistics,
compression, gambling, and portfolio theory, where the true
distribution is $p(x^n)p(y^n)$ but mistakenly a joint distribution
$p(x^n,y^n)$ is assumed. As in the definition of directed information
wherein the role of regular conditioning is replaced by causal
conditioning, we define two types of directed lautum information. The
first type
\begin{align}
L_1(X^n\to Y^n) &:=\sum_{x^n,y^n}p(x^n)p(y^n)\log
\frac{p(y^n)}{p(y^n||x^n)}
\end{align}
and the second type
\begin{align}
L_2(X^n\to Y^n) &:=\sum_{x^n,y^n}p(x^n||y^{n-1})p(y^n)\log
\frac{p(y^n)}{p(y^n||x^n)}.
\end{align}
When $p(x^n||y^{n-1})=p(x^n)$ (no feedback), the two definitions
coincide.  We will see in this section that the first type of directed
lautum information has operational meanings in scenarios where the
true distribution is $p(x^n)p(y^n)$ and, mistakenly, a joint
distribution of the form $p(x^n)p(y^n||x^n)$ is assumed. Similarly,
the second type of directed information occurs when the true
distribution is $p(x^n||y^{n-1})p(y^n)$, but a joint distribution of
the form $p(x^n||y^{n-1})p(y^n||x^n)$ is assumed.

We have the following conservation law for the first-type directed
lautum information:
\begin{lemma}[Conservation law for the first type of directed latum information]
For any discrete jointly distributed random vectors $X^n$ and $Y^n$
\begin{equation}
L(X^n;Y^n)=L_1(X^n\to Y^n)+L_1(Y^{n-1}\to X^n).
\end{equation}
\end{lemma}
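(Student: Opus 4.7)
The plan is to reduce the identity to the chain rule \eqref{e_chain_rule} for causal conditioning, in direct analogy with how the classical conservation law \eqref{e_conservation_law} for directed information reduces to the same chain rule.

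First I would rewrite the lautum information in its ``reverse KL'' form
\[
L(X^n;Y^n) = \sum_{x^n,y^n} p(x^n)p(y^n) \log \frac{p(x^n)p(y^n)}{p(x^n,y^n)},
\]
which is immediate from the defining expression $L(X^n;Y^n)=\sum p(x^n)p(y^n)\log\frac{p(y^n)}{p(y^n|x^n)}$ after multiplying numerator and denominator inside the log by $p(x^n)$. Next I would invoke the causal-conditioning chain rule $p(x^n,y^n)=p(x^n\|y^{n-1})\,p(y^n\|x^n)$ from \eqref{e_chain_rule} to factor the denominator, giving
\[
\log \frac{p(x^n)p(y^n)}{p(x^n,y^n)}
= \log \frac{p(y^n)}{p(y^n\|x^n)} + \log \frac{p(x^n)}{p(x^n\|y^{n-1})}.
\]

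Substituting this decomposition back and splitting the sum gives two terms. The first is exactly $L_1(X^n\to Y^n)$ by definition. The second, by the symmetric pattern of the definition (simply swapping the roles of the sequences and using the one-step-shifted causal conditioning $p(x^n\|y^{n-1})$), equals $L_1(Y^{n-1}\to X^n)$, which completes the identity.

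There is essentially no obstacle beyond bookkeeping; the only point worth being careful about is to spell out the symmetric definition
\[
L_1(Y^{n-1}\to X^n) := \sum_{x^n,y^n} p(x^n)p(y^n) \log \frac{p(x^n)}{p(x^n\|y^{n-1})},
\]
so that the second term produced by the chain-rule split is unambiguously identified with it. Once this convention is fixed, the proof is a one-line application of \eqref{e_chain_rule} followed by linearity of expectation, and it mirrors Massey and Massey's derivation of \eqref{e_conservation_law} with $p(x^n)p(y^n)$ replacing $p(x^n,y^n)$ as the averaging measure.
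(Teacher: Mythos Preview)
Your proposal is correct and follows essentially the same approach as the paper: rewrite $L(X^n;Y^n)$ as the reverse KL $\sum p(x^n)p(y^n)\log\frac{p(x^n)p(y^n)}{p(x^n,y^n)}$, apply the causal chain rule \eqref{e_chain_rule} to factor the denominator, and split the logarithm into the two $L_1$ terms. Your remark about spelling out the convention for $L_1(Y^{n-1}\to X^n)$ is a useful clarification, but otherwise the argument is identical to the paper's.
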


\begin{proof}
Consider
\begin{align}\label{e_cons_lautum}
L_1(X^n;Y^n)&\stackrel{(a)}{=}
\sum_{x^n,y^n}p(x^n)p(y^n)\log \frac{p(y^n)p(x^n)}{p(y^n,x^n)}\nonumber\\
&\stackrel{(b)}{=}\sum_{x^n,y^n}p(x^n)p(y^n)\log \frac{p(y^n)p(x^n)}{p(y^n||x^n)p(x^n||y^{n-1})}\nonumber\\
&\stackrel{}{=}\sum_{x^n,y^n}p(x^n)p(y^n)\log \frac{p(y^n)}{p(y^n||x^n)}+\sum_{x^n,y^n}p(x^n)p(y^n)\log \frac{p(x^n)}{p(x^n||y^{n-1})}\nonumber\\
&\stackrel{}{=} L_1(X^n\to Y^n)+L_1(Y^{n-1}\to X^n),
\end{align}
where (a) follows from the definition of lautum information and (b)
follows from the chain rule $p(y^n,x^n)=p(y^n||x^n)p(x^n||y^{n-1})$.
\end{proof}

A direct consequence of the lemma is the following  condition for
the equality between two types of directed lautum information and
regular lautum information.

\begin{corollary}
If
\begin{equation}
L(X^n;Y^n)=L_1(X^n\to Y^n),
\end{equation}
then
\begin{equation}\label{e_cond}
p(x^n)=p(x^n||y^{n-1})\text{ for all }(x^n,y^n) \in \mathcal
X^n\times \mathcal Y^n \text{ with } p(x^n,y^n)>0.
\end{equation}
Conversely, if (\ref{e_cond}) holds, then
\begin{equation}
L(X^n;Y^n)=L_1(X^n\to Y^n)=L_2(X^n\to Y^n).
\end{equation}

\end{corollary}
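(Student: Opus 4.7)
The plan is to derive the corollary directly from the conservation law just established, $L(X^n;Y^n) = L_1(X^n \to Y^n) + L_1(Y^{n-1} \to X^n)$, by recognizing $L_1(Y^{n-1} \to X^n)$ as an ordinary Kullback--Leibler divergence and invoking its non-negativity together with the standard characterization of its equality case.

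First I would recast $L_1(Y^{n-1} \to X^n)$ as a KL divergence. Because $\sum_{x^n} p(x^n||y^{n-1}) = 1$ for every $y^{n-1}$ (a telescoping of the chain rule in (\ref{e_causal_cond_def})), both $p(x^n)p(y^{n-1})$ and $p(x^n||y^{n-1})p(y^{n-1})$ are bona fide joint p.m.f.'s on $\mathcal{X}^n \times \mathcal{Y}^{n-1}$. Summing out $y_n$ in the definition of $L_1(Y^{n-1} \to X^n)$ (permissible because the logarithmic factor depends on $y^n$ only through $y^{n-1}$) yields
\[
L_1(Y^{n-1} \to X^n) = D\bigl(p(x^n)p(y^{n-1})\,\big\|\,p(x^n||y^{n-1})p(y^{n-1})\bigr) \ge 0,
\]
with equality if and only if $p(x^n) = p(x^n||y^{n-1})$ on the support of $p(x^n, y^{n-1})$.

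For the forward implication, the hypothesis $L(X^n;Y^n) = L_1(X^n \to Y^n)$ together with the conservation law forces $L_1(Y^{n-1} \to X^n) = 0$, and the equivalence above gives $p(x^n)=p(x^n||y^{n-1})$ on the support of $p(x^n,y^{n-1})$; since $p(x^n,y^n)>0$ implies $p(x^n,y^{n-1})>0$, this is exactly (\ref{e_cond}). For the converse, I would observe that $p(x^n||y^{n-1})$ is a function of $(x^n,y^{n-1})$ alone and that $p(x^n,y^{n-1})>0$ iff $p(x^n,y^n)>0$ for some $y_n$; consequently (\ref{e_cond}), although stated over pairs $(x^n,y^n)$, actually guarantees $p(x^n)=p(x^n||y^{n-1})$ throughout the support of $p(x^n,y^{n-1})$, whence $L_1(Y^{n-1}\to X^n)=0$ and the conservation law delivers $L(X^n;Y^n)=L_1(X^n\to Y^n)$. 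The remaining identity $L_1(X^n\to Y^n)=L_2(X^n\to Y^n)$ follows from a termwise comparison of the two definitions: they differ only in the outer weight $p(x^n)$ versus $p(x^n||y^{n-1})$, which coincide on every $(x^n,y^n)$ where the remaining factors are nonzero.

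The only delicate point --- and the one I would need to handle with care --- is the bookkeeping of the various supports (``$p(x^n,y^n)>0$'' versus ``$p(x^n,y^{n-1})>0$'' versus ``$p(x^n)p(y^{n-1})>0$''); once these are reconciled, the entire argument collapses to the non-negativity of KL divergence together with the conservation law, with no further machinery required.
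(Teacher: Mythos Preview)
Your approach is essentially the same as the paper's: the conservation law $L(X^n;Y^n)=L_1(X^n\to Y^n)+L_1(Y^{n-1}\to X^n)$ together with the nonnegativity (and equality case) of Kullback--Leibler divergence for the forward direction, and direct comparison of the definitions for the converse. The paper's proof is terser (it simply cites the conservation law and KL nonnegativity for the first part and ``the definitions'' for the second), while you spell out the identification of $L_1(Y^{n-1}\to X^n)$ as a divergence and flag the support bookkeeping explicitly; the substance is the same.
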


\begin{proof}
The proof of the first part follows from the conservation
law (\ref{e_cons_lautum}) and the nonnegativity of Kullback--Leibler
divergence~\cite[Theorem 2.6.3]{CovThom06} (i.e., $L_1(Y^{n-1}\to
X^n)=0$ implies that $p(x^n)=p(x^n||y^{n-1})$). The second part
follows from the definitions of regular and directed lautum
information.
\end{proof}

The {\it lautum information rate} and {\it directed lautum
  information rates} are respectively defined as
 \begin{align}
\mathcal L( X; Y)&:=\lim_{n\to \infty} \frac{1}{n}L( Y^n ; X^n),\label{e_lautum_rate} \\
\mathcal L_j(X \to Y)&:=\lim_{n\to \infty} \frac{1}{n}L_j( Y^n \to
X^n)\quad \text{for } j=1,2\label{e_directed_lautum_rate},
\end{align}
whenever the limits exist. The next lemma provides a technical
condition for the existence of the limits.

\begin{lemma} \label{l_suff_lautum_rate}
If the process $(X_n,Y_n)$ is stationary and Markov (i.e.,
$p(x_i,y_i|x^{i-1},y^{i-1})=p(x_i,y_i|x_{i-k}^{i-1},y_{i-k}^{i-1})$
for some finite $k$), then $\mathcal L(X; Y)$ and $\mathcal L_2(X \to
Y)$ are well defined.  Similarly, if the process $\{X^n,Y^n\}\sim
p(x^n)p(y^n||x^n)$ is stationary and Markov, then
$\mathcal L_1(X \to Y)$ is well defined.
 \end{lemma}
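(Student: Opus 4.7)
The plan is to reduce each of the three convergence claims to showing that a Cesaro average of log-conditional-probability terms converges; stationarity together with the Markov property makes each such term eventually shift-invariant in the time index, which yields the limit.

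For $\mathcal L(X;Y)$, one rewrites the lautum information as a Kullback--Leibler divergence, $L(X^n;Y^n) = D(p(x^n)p(y^n)\,||\,p(x^n,y^n)) = -H(X^n) - H(Y^n) - E_{P\times P}[\log p(X^n,Y^n)]$, where $P\times P$ denotes the product measure $p(x^n)p(y^n)$. Normalized by $n$, the first two terms converge to the $X$- and $Y$-entropy rates by stationarity of the marginals. For the third, the joint Markov property of order $k$ gives the chain-rule expansion $\log p(x^n,y^n) = \log p(x^k,y^k) + \sum_{i=k+1}^{n} \log p(x_i,y_i\,|\,x^{i-1}_{i-k},y^{i-1}_{i-k})$; the initial term contributes $O(1/n)$, and the expectation of each summand under $P\times P$ depends only on the window marginals $p(x^i_{i-k})$ and $p(y^i_{i-k})$ together with the stationary Markov kernel. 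All of these are shift-invariant in $i$, so every summand has the same value for $i>k$ and the Cesaro sum converges.

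For $\mathcal L_2(X\to Y)$, write $L_2(X^n\to Y^n) = D(p(x^n||y^{n-1})p(y^n)\,||\,p(x^n,y^n)) = -H(Y^n) - E_Q[\log p(Y^n||X^n)]$ with $Q := p(x^n||y^{n-1})p(y^n)$. Using the joint order-$k$ Markov property and writing $p(y_i\,|\,y^{i-1},x^i)$ as a ratio of one-step joint conditionals, one checks that $p(y_i\,|\,y^{i-1},x^i) = p(y_i\,|\,y^{i-1}_{i-k},x^i_{i-k})$ for $i>k$. The measure $Q$ is determined by the stationary $Y$-marginal and the time-invariant causal kernel $p(x_i\,|\,x^{i-1}_{i-k},y^{i-1}_{i-k})$; its consistency across $n$ and its stationarity are immediate from these two ingredients. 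Hence the finite-window marginal $Q(x^i_{i-k},y^i_{i-k})$ is shift-invariant in $i$, every term in $E_Q[\log p(Y^n||X^n)]$ is constant for $i>k$, and Cesaro averaging delivers the limit. For $\mathcal L_1(X\to Y)$ under the alternative hypothesis that $\tilde p(x^n,y^n) := p(x^n)p(y^n||x^n)$ is stationary and Markov of order $k$, the same reduction gives $p(y_i\,|\,y^{i-1},x^i) = p(y_i\,|\,y^{i-1}_{i-k},x^i_{i-k})$ for $i>k$, while the product measure $p(x^n)p(y^n)$ built from the (stationary) marginals of $\tilde p$ has shift-invariant sliding-window marginals, so Cesaro averaging of $\tfrac{1}{n}L_1(X^n\to Y^n) = -\tfrac{1}{n}H(Y^n) - \tfrac{1}{n}E_{p(x^n)p(y^n)}[\log p(Y^n||X^n)]$ yields the limit.

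The subtlety to be careful about is that the $Y$-marginal of the joint and the induced mixture $Q$ need not themselves be Markov chains even though the joint is---typically they are only hidden Markov. The argument does not need them to be Markov; it uses only the bounded-memory form of the conditionals $p(y_i\,|\,y^{i-1},x^i)$ and $p(x_i,y_i\,|\,x^{i-1},y^{i-1})$, which follows from the Markov property of the joint, together with stationarity of the relevant finite-window marginals, which follows from stationarity of the joint. This combination is enough to render every summand in the Cesaro decomposition a constant in $i$, yielding the desired convergence.
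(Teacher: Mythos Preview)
Your overall strategy coincides with the paper's: write the (directed) lautum information as entropy terms plus a cross term $-E_R[\log p(X^n,Y^n)]$ (or an equivalent piece) under a suitable reference measure $R$, invoke the order-$k$ Markov assumption to decompose the log-probability into a sum of sliding-window functions, and then use stationarity of $R$ so that each summand's expectation is independent of $i$ once $i>k$, whence the Cesaro average converges. The paper writes this out for $\mathcal L_1$ and relegates $\mathcal L$ and $\mathcal L_2$ to ``a similar argument.'' Your treatments of $\mathcal L$ and $\mathcal L_1$ are correct and are exactly what the paper does (or intends): in both cases the reference measure is the product $p(x^n)p(y^n)$, which is stationary because each factor is the marginal of a stationary process.

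For $\mathcal L_2$, however, your key step does not go through as stated. You assert that $Q:=p(x^n||y^{n-1})\,p(y^n)$ is stationary---and hence has shift-invariant window marginals---``immediately'' from the stationary $Y$-marginal and the time-invariant causal kernel. This is not so. Under $Q$, the first kernel factor is $p(x_1\mid x^0,y^0)=p(x_1)$, so $X_1$ is independent of the entire $Y$ sequence, whereas $X_j$ for $j\ge 2$ depends on $Y^{j-1}$; a window touching index $1$ therefore has a different $Q$-law than its shift. Concretely, with $k=1$, $Y$ a two-state stationary Markov chain with $p(Y_i=Y_{i-1})=0.9$, and $p(x_i\mid x_{i-1},y_{i-1})=\mathbf 1\{x_i=y_{i-1}\}$, one checks directly that $Q(X_1^2,Y_1^2)\ne Q(X_2^3,Y_2^3)$, so the window marginals are not shift-invariant even for $i>k$. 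At best one can hope that the $Q$-window marginals \emph{converge} as $i\to\infty$ (which would still yield the Cesaro limit), but that requires an ergodic or coupling argument for the $X$-chain driven by the stationary $Y$-environment, and it is supplied neither by your sketch nor by the paper's ``similar argument.''
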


\begin{proof}
It is easy to see the sufficiency of the conditions for $\mathcal
L_1( X \to \ Y)$ from the following identity:
 \begin{align*}
 L_1(\mathcal X^n \to \mathcal Y^n)
&=
\sum_{x^n,y^n}p(x^n)p(y^n)\log \frac{p(x^n)p(y^n)}{p(x^n)p(y^n||x^n)}\nonumber \\
&=-H(X^n)-H(Y^n)- \sum_{x^n,y^n}p(x^n)p(y^n)\log {p(x^n)p(y^n||x^n)}.
\end{align*}
Since the process is stationary the limits $\lim_{n\to \infty
}\frac{1}{n}H(X^n)$ and $\lim_{n\to \infty } \frac{1}{n}H(Y^n)$
exist. Furthermore, since $p(x^n,y^n)=p(x^n)p(y^n||x^n)$ is assumed
to be stationary and  Markov, the limit $\lim_{n\to
\infty } \frac{1}{n}\sum_{x^n,y^n}p(x^n)p(y^n)\log
{p(x^n)p(y^n||x^n)}$ exists.  The sufficiency of the condition
can be proved for $\mathcal L_2( X \to  Y)$ and the lautum
information rate using a similar argument.
\end{proof}

Adding causality constraints to the problems that were considered in
\cite{Palomar_verdu08LautumInformation}, we obtain the following
results for horse race gambling and data compression.

\subsection{Horse Race Gambling with Mismatched Causal Side Information}

Consider the horse race setting in Section \ref{s_horse_race} where
the gambler has causal side information. The joint distribution of
horse race outcomes $X^n$ and the side information $Y^n$ is given by
$p(x^n)p(y^n||x^{n-1})$, namely, $X_i \to X^{i-1} \to Y^i$ form a Markov chain,
and therefore the side information does not increase the
growth rate. The gambler mistakenly assumes a joint distribution
$p(x^n||y^n)p(y^n||x^{n-1})$,  and therefore he/she uses a
gambling scheme $b^*(x^n||y^n)=p(x^n||y^n)$.

\begin{theorem}
If the gambling scheme $b^*(x^n||y^n)=p(x^n||y^n)$ is applied to the
horse race described above, then the penalty in the growth with
respect to the gambling scheme $b^*(x^n)$ that uses no side
information is $L_2(Y^n\to X^n)$. For the special case where the
side information is independent of the horse race outcomes, the
penalty is $L_1(Y^n\to X^n)$.
\end{theorem}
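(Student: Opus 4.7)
The plan is to compute the growth under the mismatched gambling scheme directly and then recognize the difference as the directed lautum information.

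First I would set up the notation carefully: the \emph{true} joint distribution on $(X^n,Y^n)$ is $q(x^n,y^n)=p(x^n)p(y^n\|x^{n-1})$, while the gambler proceeds as if the joint law were $p(x^n\|y^n)p(y^n\|x^{n-1})$ and accordingly bets with the proportions $b^*(x^n\|y^n)=p(x^n\|y^n)$. Observe that the $X^n$-marginal of $q$ is exactly $p(x^n)$, because $\sum_{y^n}p(y^n\|x^{n-1})=1$ (a telescoping consequence of the causal-conditioning definition \eqref{e_causal_cond_def}). Hence the entropy $H(X^n)$ appearing in the reference scheme is unambiguous.

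Next I would evaluate the two growths. Using $S(X^n\|Y^n)=\prod_i b^*(X_i|X^{i-1},Y^i)o(X_i|X^{i-1})$ and taking expectation under $q$ gives
\[
W_{b^*}(X^n\|Y^n)=E_q[\log p(X^n\|Y^n)]+E_q[\log o(X^n)].
\]
Theorem~\ref{t_gamble_all_money} applied to the marginal $p(x^n)$ yields
\[
W^*(X^n)=-H(X^n)+E_q[\log o(X^n)],
\]
since under $q$ the side information is useless for optimal gambling: the marginal on $X^n$ is already $p(x^n)$, and the optimal scheme without side information is $b(x^n)=p(x^n)$.

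Subtracting, the $E_q[\log o(X^n)]$ terms cancel and the penalty becomes
\[
W^*(X^n)-W_{b^*}(X^n\|Y^n)=-H(X^n)-E_q[\log p(X^n\|Y^n)]=E_q\!\left[\log\frac{p(X^n)}{p(X^n\|Y^n)}\right],
\]
where in the last step I used that the $X^n$-marginal of $q$ is $p(x^n)$ so that $E_q[\log p(X^n)]=-H(X^n)$. Now I would simply read off the definition: with $X$ and $Y$ interchanged in the definition of $L_2$, the right-hand side is exactly
\[
\sum_{x^n,y^n}p(x^n)p(y^n\|x^{n-1})\log\frac{p(x^n)}{p(x^n\|y^n)}=L_2(Y^n\to X^n).
\]

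For the special case in which the side information is independent of the outcomes, $p(y^n\|x^{n-1})=p(y^n)$, so $q(x^n,y^n)=p(x^n)p(y^n)$ and the expression above collapses to $L_1(Y^n\to X^n)$ by definition. There is no real obstacle here beyond bookkeeping: the only subtlety is to keep straight that all expectations are under the \emph{true} distribution $q$, and to notice that under $q$ the $X^n$-marginal coincides with the $p(x^n)$ that appears in both the reference growth $W^*(X^n)$ and in the lautum integrand.
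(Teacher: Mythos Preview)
Your argument is correct and follows essentially the same path as the paper's proof: compute the mismatched growth $E_q[\log p(X^n\|Y^n)]+E_q[\log o(X^n)]$, subtract it from the optimal no-side-information growth $-H(X^n)+E_q[\log o(X^n)]$, and identify the resulting $E_q[\log(p(X^n)/p(X^n\|Y^n))]$ as $L_2(Y^n\to X^n)$, with the independent special case collapsing to $L_1$. Your additional remarks on why the $X^n$-marginal of $q$ equals $p(x^n)$ are a welcome clarification, but the approach is the same.
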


\begin{proof}
The optimal growth rate where the joint distribution is
$p(x^n)p(y^n||x^{n-1})$ is $W^*(X^n)= E[\log o(X^n)]-H(X^n)$. Let
$E_{p(x^n)p(y^n||x^{n-1})}$ denotes the expectation with respect to
the joint distribution $p(x^n)p(y^n||x^{n-1})$. The growth rate for
the gambling strategy $b(x^n||y^n)=p(x^n||y^n)$ is
\begin{align}
W^*(X^n||Y^n)&\stackrel{}{=} E_{p(x^n)p(y^n||x^{n-1})}[\log
b(X^n||Y^n)o(X^n)]\nonumber \\
&=  E_{p(x^n)p(y^n||x^{n-1})}[\log
p(X^n||Y^n)]+  E_{p(x^n)p(y^n||x^{n-1})}[\log o(X^n)];%\nonumber \\
\end{align}
hence $W^*(X^n)-W^*(X^n||Y^n)=L_2(Y^n\to X^n)$. In the special case,
where  the side information is independent of the horse outcome,
namely, $p(y^n||x^{n-1})=p(y^n)$, then $L_2(Y^n\to X^n)=L_1(Y^n\to
X^n)$.
\end{proof}

This result can be readily extended to the general stock market, for
which the penalty is {\em upper bounded} by $L_2(Y^n\to X^n)$.

\subsection{Compression with Joint Distribution Mismatch}

In Section \ref{s_data_compreession} we investigated the cost of
ignoring forward and backward links when compressing a jointly
$(X^n,Y^n)$ by an optimal lossless variable length code. Here we
investigate the penalty of assuming forward and backward links
incorrectly when neither exists. Let $X^n$ and $Y^n$ be independent
sequences. Suppose we compress them with a scheme that would have been optimal under the incorrect
assumption that the forward link $p(y^n||x^n)$ exists.  The optimal
lossless average variable length code under these assumptions
satisfies (up to 1 bit per source symbol)
\begin{align}
E(L(X^n,Y^n))&=\sum_{x^n,y^n}p(x^n)p(y^n)\log \frac{1}{p(y^n||x^n)p(x^n)}\nonumber \\
&= \sum_{x^n,y^n}p(x^n)p(y^n)\log \frac{p(x^n)p(y^n)}{p(y^n||x^n)p(x^n)}+H(X^n)+H(Y^n)\nonumber \\
&= \sum_{x^n,y^n}p(x^n)p(y^n)\log \frac{p(y^n)}{p(y^n||x^n)}+H(X^n)+H(Y^n)\nonumber \\
&=L_1(X^{n}\to Y^n)+H(X^n)+H(Y^n).
\end{align}
Hence the penalty is $L_1(X^{n}\to Y^n)$. Similarly, if we
incorrectly assume that the backward link $p(x^{n}||y^{n-1})$
exists, then
\begin{align}
E(L(X^n,Y^n))&=\sum_{x^n,y^n}p(x^n)p(y^n)\log \frac{1}{p(x^n||y^{n-1})p(y^n)}\nonumber \\
&= \sum_{x^n,y^n}p(x^n)p(y^n)\log \frac{p(x^n)p(y^n)}{p(x^n||y^{n-1})p(y^n)}+H(X^n)+H(Y^n)\nonumber \\
&= \sum_{x^n,y^n}p(x^n)p(y^n)\log \frac{p(x^n)}{p(x^n||y^{n-1})}+H(X^n)+H(Y^n)\nonumber \\
&= L_1(Y^{n-1}\to X^{n})+H(X^n)+H(Y^n).
\end{align}
Hence the penalty is $L_1(Y^{n-1}\to X^n)$. If both links are
mistakenly assumed, the
penalty\cite{Palomar_verdu08LautumInformation} is lautum information
$L(X^n; Y^n)$.  Note that the penalty due to wrongly assuming both
links is the sum of the penalty from wrongly assuming each link.
This is due to the conservation law (\ref{e_cons_lautum}).

\subsection{Hypothesis Testing}

We revisit the hypothesis testing problem in Section~V, which is describe in Fig. \ref{f_hypothesis}. As a dual to
Theorem 6, we characterize the minimum type I error exponent given
the type II error probability:

\begin{theorem}[Chernoff--Stein lemma for the causal dependence test: Type I error]
\label{t_Chernoff-Stel_lautum}
Let $(X,Y) = \{X_i,Y_i\}_{i=1}^\infty$ be stationary, ergodic, and
Markov of finite order such that $p(x^n,y^n) = 0$ implies
$p(x^n||y^{n-1})p(y^n) = 0$.  Let $A_n\subseteq {(\mathcal X\times
  \mathcal Y)}^n$ be an acceptance region, and let $\alpha_n$ and
$\beta_n$ be the corresponding probabilities of type I and type II
errors (\ref{e_error_def}).  For $0<\epsilon<\frac{1}{2}$, let
\begin{equation}\label{e_alpha_n}
\alpha_n^{(\epsilon)}=\min_{A_n\subseteq {(\mathcal X\times \mathcal Y)}^n, \beta_n<\epsilon}
\alpha_n.
\end{equation}
Then
\begin{equation}
\lim_{n\to\infty} -\frac{1}{n} \log \alpha_n^{(\epsilon)}=\mathcal
L_2(X\to Y),
\end{equation}
where the directed lautum information rate is the one induced by the joint
distribution from $H_0$, i.e., $p(x^n||y^{n-1})p(y^n||x^n)$.
\end{theorem}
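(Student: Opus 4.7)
The plan is to recognize the claim as a Chernoff--Stein type assertion for the hypothesis pair $P_0(x^n,y^n) = p(x^n||y^{n-1})p(y^n||x^n)$ and $P_1(x^n,y^n) = p(x^n||y^{n-1})p(y^n)$, dual to Theorem~\ref{t_Chernoff-Stel}. A one-line calculation shows $L_2(X^n\to Y^n) = D(P_1\|P_0)$, so the conclusion amounts to showing that the optimal $\alpha_n$-exponent subject to $\beta_n<\epsilon$ equals the per-symbol KL rate $\mathcal L_2(X\to Y)$. The Radon--Nikodym derivative simplifies to $\log \frac{dP_1}{dP_0}(x^n,y^n) = \log\frac{p(y^n)}{p(y^n||x^n)}$, and the absolute-continuity hypothesis $p(x^n,y^n)=0 \Rightarrow p(x^n||y^{n-1})p(y^n)=0$ guarantees $P_1\ll P_0$, so this log-ratio is $P_1$-a.s.\ finite.

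The technical core is the following SMB/AEP statement under $P_1$:
\begin{equation*}
\frac{1}{n}\log\frac{p(Y^n)}{p(Y^n||X^n)} \;\longrightarrow\; \mathcal L_2(X\to Y)\quad\text{in }P_1\text{-probability.}
\end{equation*}
I would split the log-ratio into two summands. The $Y$-marginal under $P_1$ coincides with that under $P_0$, which is stationary ergodic, so the classical Shannon--McMillan--Breiman theorem yields $\frac{1}{n}\log p(Y^n) \to -\mathcal H(Y)$ almost surely. For $\frac{1}{n}\log p(Y^n||X^n) = \frac{1}{n}\sum_{i=1}^n \log p(Y_i|Y^{i-1},X^i)$, the finite-order Markov assumption on $(X,Y)$ under $H_0$ collapses each summand to the sliding-window function $\log p(Y_i|Y_{i-k}^{i-1},X_{i-k}^i)$. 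Under $P_1$ the transition factorizes as $p_{H_1}(x_i,y_i|x^{i-1},y^{i-1}) = p(x_i|x_{i-k}^{i-1},y_{i-k}^{i-1})\,p(y_i|y^{i-1})$, and $(X,Y)$ remains stationary under $P_1$; ergodicity under $P_1$ follows from ergodicity of the $Y$-marginal together with the finite-memory $X$-update, so Birkhoff's theorem delivers almost-sure convergence of a sliding-window time average, and the expectation identity $E_{P_1}[\log \frac{p(Y^n)}{p(Y^n||X^n)}] = L_2(X^n\to Y^n)$ identifies the limit as $\mathcal L_2(X\to Y)$.

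With this AEP in hand the rest mirrors Theorem~\ref{t_Chernoff-Stel} with the two hypotheses interchanged. For achievability, take
\begin{equation*}
A_n = \Bigl\{(x^n,y^n):\ \tfrac{1}{n}\log\tfrac{p(y^n)}{p(y^n||x^n)} < \mathcal L_2(X\to Y) - \delta \Bigr\}
\end{equation*}
as the acceptance region. The AEP gives $P_1(A_n)\to 0$, so $\beta_n<\epsilon$ eventually; on $A_n^c$ one has $p_{H_0}\le p_{H_1}\cdot 2^{-n(\mathcal L_2 - \delta)}$, hence $\alpha_n = P_0(A_n^c) \le 2^{-n(\mathcal L_2-\delta)}$. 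For the converse, for any $B_n$ with $P_1(B_n)<\epsilon<\tfrac{1}{2}$, I would intersect $B_n^c$ with $T_n = \{\tfrac{1}{n}\log \tfrac{p(y^n)}{p(y^n||x^n)} < \mathcal L_2+\delta\}$ (which has $P_1(T_n)\to 1$ by the same AEP) and use $p_{H_0}\ge p_{H_1}\cdot 2^{-n(\mathcal L_2+\delta)}$ on $T_n$ to obtain $\alpha_n \ge 2^{-n(\mathcal L_2+\delta)}(\tfrac{1}{2}-\epsilon)$ for large $n$. Letting $\delta\downarrow 0$ matches the two bounds.

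The principal difficulty is the AEP step. Theorem~\ref{t_Chernoff-Stel} could invoke the ready-made directed-information AEP of Pradhan--Venkataramanan, which describes $\log[p(Y^n||X^n)/p(Y^n)]$ under its natural measure $P_0$; here we need the same log-ratio under the \emph{mismatched} measure $P_1$. This is precisely why the hypothesis is strengthened to stationary, finite-order Markov: that assumption makes the conditional kernel $p(Y_i|Y^{i-1},X^i)$ a bounded-window function of a stationary ergodic process under $P_1$, bringing Birkhoff's theorem to bear. Verifying ergodicity of $(X,Y)$ under $P_1$---which is not assumed directly---and cleanly identifying the almost-sure limit with $\mathcal L_2(X\to Y)$ are the steps I expect to require the most care.
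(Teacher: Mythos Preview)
Your proposal is correct and mirrors the paper's own argument: both reduce the claim to a Chernoff--Stein statement for the pair $(P_0,P_1)$ and invoke an AEP for $\tfrac{1}{n}\log[p(Y^n)/p(Y^n||X^n)]$ under $H_1$, the paper using the two-sided typical set $A_n^c=\bigl\{\bigl|\tfrac{1}{n}\log\tfrac{p(y^n)}{p(y^n||x^n)}-\mathcal L_2\bigr|<\delta\bigr\}$ in place of your one-sided sets. The paper justifies the AEP step simply by ``using the Markov assumption for guaranteeing the AEP'' (its draft appendix appeals to Barron's generalized Shannon--McMillan--Breiman theorem), whereas you unpack it via classical SMB on $p(Y^n)$ plus Birkhoff on the sliding-window term $\log p(Y_i|Y_{i-k}^{i-1},X_{i-k}^i)$---a minor elaboration within the same method, not a different approach.
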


The proof of Theorem \ref{t_Chernoff-Stel_lautum} follows very similar
steps as in Theorem \ref{t_Chernoff-Stel} upon letting
\begin{equation}
A_n^c=\left\{x^n,y^n:\left|\log \frac{p(y^n)}{p(y^n||x^n)}-\mathcal
L_2( X\to Y)\right|<\delta \right\},
\end{equation}
analogously to (\ref{e_A_n}), and upon using the Markov assumption for
guaranteeing the AEP; hence we omit the details.

\section{Concluding Remarks}
We have established the role of directed information in portfolio
theory, data compression, and hypothesis testing. Put together with
its key role in communications~\cite{Kramer98, Kramer03,
PermuterWeissmanGoldsmith09,
  TatikondaMitter_IT09, Kim07_feedback, PermuterWeissmanChenMAC_IT09,
  ShraderPemuter07ISIT,Pradhan07Venkataramanan}, and in estimation~\cite{PermuterKimTsachy_ContinousDirectedConCom09},
directed information is thus emerging as a key information theoretic
entity in scenarios where causality and the arrow of time are crucial
to the way a system operates. Among other things, these findings
suggest that the estimation of directed information can be an
effective diagnostic tool for inferring causal relationships and
related properties in a wide array of problems. This direction is
under current investigation~\cite{Lei09directed_estimation}.

\iffalse
In this paper we have seen that directed information is a fundamental
measure in problems with causality constraints, in addition to
feedback capacity and rate distortion with feed-forward. Directed
information characterizes the increase in growth rate in a horse race
gambling due to causal side information, and it upper bounds the
growth rate in a stock market due to causal side information. It
quantifies the value of {causal} side information in instantaneous
compression and the benefit of {causal} inference in joint compression
of two random processes. And it evaluates the best error exponent for
testing whether a random process $Y$ {causally} influences another
process $X$ or not.

These results give a natural interpretation of directed information as
a measure of causality between two random processes and it further
motivates us to find additional mathematical properties of directed
information and their applications in various fields.
\fi

\section*{Acknowledgment}
The authors would like to thank Ioannis Kontoyiannis  for
helpful discussions.

\bibliographystyle{unsrt}

\end{document}